\DeclareMathAlphabet{\mathscr}{LS1}{stixscr}{m}{n}
\newcommand{\expect}{\mathbb{E}}
\newcommand{\prob}{\mathbb{P}}
\tikzstyle{dot}=[circle,fill,black,inner sep=1pt]
\newtheorem{theorem}{Theorem}
\newtheorem{lemma}{Lemma}
\newtheorem{proposition}{Proposition}
\newtheorem{definition}{Definition}
\renewcommand{\tilde}{\widetilde}
\renewcommand{\root}{\textrm{\o}}
\title{The Planted Spanning Tree Problem}
\author{Mehrdad Moharrami, Cristopher Moore, and Jiaming Xu\thanks{
		Accepted for presentation at the Conference on Learning Theory (COLT) 2025. 
            M.\ Moharrami is with Computer Science Department, University of Iowa, Iowa City, IA, USA, \texttt{moharami@uiowa.edu}.
		C.\ Moore is with Santa Fe Institute, Santa Fe, NM, USA, \texttt{moore@santafe.edu}.
		J.\ Xu is with The Fuqua School of Business, Duke University, Durham, NC, USA, \texttt{jx77@duke.edu}.
}}
\date{\today}
\begin{document}
	
\pgfdeclarelayer{background}
\pgfdeclarelayer{foreground}
\pgfsetlayers{background,main,foreground}

\maketitle

\begin{abstract}
We study the problem of detecting and recovering a planted spanning tree $M_n^*$ hidden within a complete, randomly weighted graph $G_n$.  Specifically, each edge $e$ has a non-negative weight drawn independently from $P_n$ if $e \in M_n^*$ and from  
$Q_n$ otherwise, where $P_n \equiv P$ is fixed and $Q_n$ scales with $n$ such that its density at the origin satisfies $\lim_{n\to\infty} n Q'_n(0)=1.$  
We consider two representative cases: when $M_n^*$ is either a uniform spanning tree or a uniform Hamiltonian path. 
We analyze the recovery performance of the minimum spanning tree (MST) algorithm and derive a fixed-point equation that characterizes the asymptotic fraction of edges in $M_n^*$ successfully recovered by the MST as  $n \to \infty.$ Furthermore, we establish the asymptotic mean  weight of the MST, extending Frieze's $\zeta(3)$ result to the planted model. {Leveraging this result, we design an efficient test based on the MST weight and show that it can distinguish the planted model from the unplanted model with vanishing testing error as  $n \to \infty.$} Our analysis relies on an asymptotic characterization of the local structure of the planted model, employing the framework of local weak convergence.
\end{abstract}

\begin{keywords}
\!\!\!\!\!~Spanning tree, Hamiltonian path, Local weak convergence, Asymptotic overlap
\end{keywords}

\section{Introduction}\label{sec:intro}

The study of planted problems in random graph models has a rich and extensive history, focusing on uncovering hidden structures planted within an otherwise random graph. Notable examples include the planted clique problem~\citep{alon1998finding} and community detection in stochastic block models~(e.g., \citet{DecelleKrzakalaMooreZdeborova2011Asymptotic}). A central quantity of interest in these problems is the asymptotic overlap between the planted object and the estimated one in the large-graph limit. 
While significant progress has been made, much of the work has concentrated on characterizing recovery thresholds --  critical points at which the asymptotic overlap transits from $0$ to positive values or reaches $1$~\citep{mossel2015reconstruction,massoulie2014community,mossel2018proof,DingWuXuYang2023Planted,mossel2023sharp}.
However, in many of these problems characterizing the exact value of the asymptotic overlap remains a formidable mathematical challenge.

Recent years have witnessed significant breakthroughs in this direction, with several novel techniques being developed. For instance, the interpolation method has been successfully applied to 
stochastic block models with disassortative communities, i.e., where inter-community edges are more prevalent than intra-community ones~\citep{coja2017information}. Nevertheless, this method is often complex and challenging to implement -- for instance, its applicability to stochastic block models with assortative communities remains an open question. Another closely related approach involves the analysis of belief propagation algorithms, which has recently been used to establish the asymptotic overlap in the two-group stochastic block model~\citep{yu2023ising}. 
This technique typically relies on recursive distributional equations involving infinite-dimensional probability measures, making the exact analysis highly challenging.
A third and promising technique is based on local weak convergence~\citep{AldousSteel2001Objective}. This framework was employed in \cite{MoharramiMoorXu2021Planted} to rigorously derive the asymptotic overlap for the problem of recovering a planted matching within a complete, randomly weighted bipartite graph. By leveraging the machinery of local weak convergence, they characterized the asymptotic overlap between the planted matching and the minimum-weight matching (which corresponds to the maximum-likelihood estimator) through the solution of a system of ordinary differential equations.

In this paper, we extend the framework of local weak convergence to study the problem of recovering a planted spanning tree within a complete, randomly weighted graph. 
\begin{definition}[Planted spanning tree model]~\\
{\bf Given:} $n \ge 1$, and  two distributions $P_n,Q_n$ supported on the non-negative real line. \\
{\bf Observation:} A randomly weighted, undirected complete graph $G_n=([n], W)$ with a planted spanning tree $M_n^*$ such that the weight of each edge $W_e$ is independently distributed as $P_n$ if  $e \in M^*$ and as $Q_n$ otherwise. \\
{\bf Goal}: Recover the planted spanning tree $M_n^*$ from the observed graph $G_n$. 
\end{definition}
We assume $P_n \equiv P$ is a fixed continuous distribution, while $Q_n$ scales with $n$ such that its density at the origin satisfies $\lim_{n\to \infty} n Q'_n(0)=1$.\footnote{Assuming the limit to be 
$1$ is solely for simplicity of presentation. As we will see in Section~\ref{sec:proof}, the condition $\lim_{n \to \infty} n Q_n'(0) = 1$ ensures that for any given vertex, the weights of its incident unplanted edges, when sorted in increasing order, converge to the arrival times of a rate-1 Poisson process as $n \to \infty$. This is the only property of $Q_n$ needed to characterize the local weak limit and hence the asymptotic overlaps.  
The results and analysis readily extend to the more general case where  $\lim_{n\to \infty} n Q'_n(0)=C$ for some constant $C>0.$} This scaling ensures that the minimum magnitude of unplanted edge weights incident to a given vertex is typically of the same order as the planted edge weights. A canonical example, referred to as the exponential model, assumes $P$ and $Q_n$ are  exponential distributions with mean $\mu$ and $n$, respectively. For generating the planted spanning tree $M_n^*$, we consider two representative scenarios: 
\begin{itemize}
    \item {\bf Uniform spanning tree}: $M_n^*$ is chosen uniformly  from all $n^{n-2}$ spanning trees;
    \item {\bf Uniform Hamiltonian path}:
    $M_n^*$ is  chosen uniformly from all $n!/2$ Hamiltonian paths. 
\end{itemize}
In the uniform spanning tree model, the planted spanning tree may belong to different isomorphism classes, whereas the uniform Hamiltonian path model is a special case restricted to the isomorphism class of $n$-paths.

We estimate $M_n^*$ using the minimum-weight spanning tree (MST), denoted by $M_n$. This corresponds to the maximum-likelihood estimator when $M_n^*$ is the uniform spanning tree and the edge weights follow exponential distributions. The MST estimator can be computed in linear time using greedy algorithms, making it an appealing choice even under the uniform Hamiltonian path model, as finding the Hamiltonian path of minimum weight -- namely, the Traveling Salesman Problem -- is NP-hard in the worst case.

Define the \emph{overlap} between $M_n$ and $M_n^*$ as the fraction of the common edges:
\[
\text{overlap}(M_n, M^*_n) = \frac{1}{n-1} |M_n \cap M_n^*|.
\]
We obtain the exact expression of $\lim_{n \to \infty}\mathbb{E}[\text{overlap}(M_n, M^*_n)]$ in terms of the unique solution to certain fixed-point equations. 
 Our analysis builds upon the local weak convergence framework developed in \cite{AldousSteel2001Objective,Steele2002Minimal} and follows closely the methodology of  \cite{MoharramiMoorXu2021Planted} for the planted matching model. A key distinction, however, is that while the local structure of the matching is trivial, the local structure of the uniform spanning tree converges to a non-trivial asymptotic object, known as the \emph{skeleton tree}\footnote{A skeleton tree consists of a unique infinite path extending from the root to infinity, where each vertex along this path is attached with an independent Poisson Galton-Watson tree with mean $1$.}~\cite{Aldous1991Asymptotic,Grimmett1980Random,AldousSteel2001Objective}. Furthermore, unlike the planted matching model, where the asymptotic overlap exhibits a phase transition at a critical threshold, we find that no such phase transition occurs in our setting.

Furthermore, we derive the asymptotic value of the mean weight of the minimum weight spanning tree, $\lim_{n\to \infty}  \mathbb{E}[w(M_n)]$, 
where $w(M_n)$ is defined as 
$$
w(M_n)= \frac{1}{n-1} \sum_{e \in M_n} W_e. 
$$
This result generalizes the classical finding of~\cite{Frieze1985Value} for the unplanted model with $P_n=Q_n$, which established that $\lim_{n\to \infty}  \mathbb{E}[w(M_n)] =\zeta(3)$,  where $\zeta(s) = \sum_{i=1}^{\infty} i^{-s}$ is the Riemann zeta function. Leveraging this result, we propose an efficient test based on $w(M_n)$ and show that it can distinguish the planted model from the unplanted model with vanishing testing error as $n \to \infty.$

Our approach is broadly applicable, as it can be extended to characterize the asymptotic overlap of the MST estimator in recovering other planted structures, provided that these structures exhibit well-behaved local weak convergence. More generally, we believe this work opens new avenues for applying local weak convergence to characterize asymptotic overlaps in a wide range of planted problems in random graphs.

Finally, we note that \cite{MassouliStephanTowsley2019Planting} studied a closely related problem of detecting and recovering planted trees in Erd\H{o}s-R\'enyi random graphs. However, the planted trees in their setting are much smaller in size and not spanning, and their analysis primarily focused on detection and recovery thresholds.

\section{Asymptotic Overlap and Mean Weight of MST} The asymptotic overlap between the minimum spanning tree $M_n$ and the planted tree $M^*_n$ in $G_n$ is characterized by fixed-point equations, with the form of the equations varying across different scenarios.
\begin{theorem}\label{thm:main}
    Let $P $ denote the edge weight distribution of $M_n^*$ with the cumulative distribution function $F$ and valid probability density function. The asymptotic overlap of the minimum-weight spanning tree $M_n$ is given by
    \begin{align}
        \lim_{n \to \infty} \mathbb{E}[\text{\rm overlap}(M_n, M_n^*)] = \int_0^\infty \big(1 - (1 - p_{-}(s))(1 - p_{+}(s))\big) \, dF(s),
    \end{align}
    where $p_{-}(\cdot)$ and $p_{+}(\cdot)$ are characterized by fixed-point equations, depending on the underlying structure of $M_n^*$:
    \begin{enumerate}[label = (\arabic*)]
        \item \textbf{Uniform spanning tree}: In this case, $p_{-}(\cdot) \eqqcolon p_U(\cdot)$ and $p_{+}(\cdot) \eqqcolon p_B(\cdot)$ are 
        the smallest fixed point to the following equations:
        \begin{align}
        \begin{aligned}
            p_B(s) &= \exp\big(-s(1 - p_U(s))\big) \exp\big(-(1 - p_B(s))\big),\\
            p_U(s) &= \exp\big(-s(1 - p_U(s))\big) \exp\big(-(1 - p_B(s))\big) \big(1 - F(s) + F(s)p_U(s)\big).
        \end{aligned}\label{eq:fixed_point_US}
        \end{align}
        \item \textbf{Uniform Hamiltonian path}: In this case, $p_{-}(\cdot) = p_{+}(\cdot) \eqqcolon p(\cdot)$ is the
        smallest fixed point of the following equations:
        \begin{align}
        \begin{aligned}
            p(s) &= \exp\big(-s(1 - q(s))\big) \big(1 - F(s) + F(s)p(s)\big),\\
            q(s) &= \exp\big(-s(1 - q(s))\big) \big(1 - F(s) + F(s)p(s)\big)^2. 
        \end{aligned}\label{eq:fixed_point_HP}
        \end{align}
    \end{enumerate}
    In both cases, the smallest fixed points of the equations can be found iteratively, starting from the all-zero function.
\end{theorem}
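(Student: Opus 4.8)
The plan is to analyze the local weak limit of the weighted graph $G_n$ together with its planted tree $M_n^*$, and then transfer the MST computation to this limiting object. First I would establish the local weak convergence: near a typical vertex, the unplanted edges of minimal weight, after rescaling by $n$, converge to a Poisson point process on $\mathbb{R}_+$ of rate $1$ (this is where the normalization $nQ'_n(0)\to 1$ enters), while the planted structure around the root converges to the appropriate limiting object — a skeleton tree in the uniform spanning tree case, and a bi-infinite path in the Hamiltonian path case, in each case with i.i.d.\ $P$-distributed weights on the planted edges. The relevant marked limit object is thus: an infinite tree/path carrying the planted edges, with each vertex additionally sprouting a rate-$1$ Poisson collection of ``short'' unplanted half-edges whose other endpoints behave independently (the unplanted edges form, locally, a Poisson weighted infinite tree, the ``PWIT'' of Aldous). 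I would invoke the general machinery of \cite{AldousSteel2001Objective,Steele2002Minimal} to justify that the expected overlap converges to the probability, computed on the limit object, that a given planted edge at the root is ``accepted'' by the (suitably defined local) MST rule.

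The second step is to set up the recursive ``invasion/acceptance'' analysis on the limit object. Fix a planted edge $e$ incident to the root with weight $s$ under $F$. Whether $e$ lies in the MST is governed, via the cycle/cut characterization of the MST, by whether there is an alternative path between its endpoints all of whose edges are lighter than $s$; on the limit object this unfolds into a branching recursion. Define $p_U(s)$ (resp.\ $p_B(s)$) as the probability that, looking into the subtree hanging off one endpoint, there is \emph{no} lighter alternative route back, where $U$ marks the case that the first edge explored is a planted edge and $B$ the case that it is an unplanted (Poisson) edge; the two differ precisely because a planted child edge carries an independent $P$-weight that may or may not exceed $s$ (contributing the factor $1 - F(s) + F(s)p_U(s)$), whereas the Poisson children contribute, via the thinning/exponential structure of the rate-$1$ process restricted to weights below $s$, the factors $\exp(-s(1-p_U(s)))$ and $\exp(-(1-p_B(s)))$. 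Carrying this bookkeeping through yields exactly the fixed-point system \eqref{eq:fixed_point_US}; the Hamiltonian path case is the same computation with the skeleton tree replaced by a path, so each vertex on the spine has exactly one planted continuation rather than a PGW tree, collapsing the system to \eqref{eq:fixed_point_HP}. The overlap formula then follows because $e$ is retained iff neither side admits a lighter detour, i.e.\ with probability $1 - (1-p_-(s))(1-p_+(s))$, integrated against $dF(s)$.

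The third step is the soft-analysis part: showing the recursion has a smallest fixed point reachable by iterating from zero, and that this is the object selected in the limit. Monotonicity of the right-hand maps in $(p_U,p_B)$ (resp.\ $(p,q)$) coordinatewise, together with the fact that the all-zero function is a sub-solution and the all-one function a super-solution, gives via Tarski/Knaster a smallest fixed point, and monotone iteration from zero converges to it. Matching this to the limit requires a stochastic-domination argument: the finite-$n$ acceptance probability is squeezed between the $k$-step truncated recursions (explore the planted structure and Poisson edges only to depth $k$), which converge from below to the smallest fixed point, and this is the standard way such endogeny/uniqueness issues are resolved in the PWIT literature.

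I expect the main obstacle to be the rigorous transfer step — proving that the (global, combinatorial) MST membership of a root edge is determined, in the $n\to\infty$ limit and with the right probability, by the local recursion on the limit object. The difficulty is twofold: (i) the MST is a global object, so one must control long alternative paths and argue that the relevant ``lighter detour'' information is captured by a bounded-depth neighborhood with high probability, using that lighter edges are sparse (only $O(1)$ per vertex after rescaling) and that the Poisson weighted infinite tree has no dense tangles; and (ii) the skeleton-tree limit of the uniform spanning tree is more delicate than the trivial matching limit of \cite{MoharramiMoorXu2021Planted}, so one must verify that the joint local weak limit of $(G_n, M_n^*)$ — planted tree together with the rescaled unplanted weights — is the claimed marked PWIT-plus-skeleton object, including the needed uniform integrability to pass from local weak convergence to convergence of the expected overlap. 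Handling the boundary between ``invaded'' and ``not invaded'' regions, and ruling out pathological ties or slow-decaying contributions, is where the bulk of the technical work will lie.
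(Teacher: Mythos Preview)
Your three-step plan---local weak convergence of the planted model to a PWIT decorated with the limit of $M_n^*$, then MST $\to$ minimum spanning forest via the Aldous--Steele machinery, then extinction probabilities on the limit tree satisfying the stated fixed-point equations---is exactly the route the paper takes. Two points of your sketch need correction, however.

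First, your interpretation of $p_U$ versus $p_B$ is off. These are not indexed by ``first edge planted'' versus ``first edge unplanted''; they are extinction probabilities for the two \emph{vertex types} in the skeleton-tree limit of the uniform spanning tree. A $U$-type vertex lies on the unique infinite spine (it has one distinguished planted child $\tilde 0$ that continues the spine, plus an independent Poisson$(1)$ batch of planted children, plus the Poisson unplanted children), while a $B$-type vertex lies in one of the finite PGW$(1)$ bushes hanging off the spine (Poisson$(1)$ planted children, no $\tilde 0$). With this reading, the factor $\exp(-(1-p_B(s)))$ is the PGF of the Poisson$(1)$ batch of \emph{planted} PGW children, not a contribution of the unplanted Poisson process as you wrote; the unplanted children give only $\exp(-s(1-p_U(s)))$, and the extra factor $1-F(s)+F(s)p_U(s)$ in the $p_U$ equation is the spine child $\tilde 0$. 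Your bookkeeping as stated would not reproduce \eqref{eq:fixed_point_US}. Relatedly, your verbal retention criterion (``$e$ is retained iff neither side admits a lighter detour'') is inverted: on the limit tree $e\notin M_\infty$ iff \emph{both} sides percolate to infinity through lighter edges, so $e$ is retained iff at least one side's sub-$s$ component is finite---your formula $1-(1-p_-)(1-p_+)$ is correct, but the words are not.

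Second, you anticipate more difficulty in the transfer step and the ``endogeny'' issue than is actually present. The paper does not need a squeeze between truncated recursions or a stochastic-domination argument to select the right fixed point: once joint local weak convergence $(\ell_n,G_n,M_n)\to(\ell_\infty,T_\infty,M_\infty)$ is established (this is a direct citation of the MST Convergence Theorem of Aldous--Steele, since the limit object has a.s.\ distinct edge lengths), the event $\{e\in M_\infty\}$ is \emph{by definition} the event that at least one side's sub-$s$ component is finite. Its probability is therefore an honest extinction probability of a multitype branching process, and the standard fact that extinction probabilities equal the minimal fixed point of the offspring PGF---reached by iterating from zero---finishes the identification. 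There is no recursive distributional equation with potential non-uniqueness to resolve here; that subtlety is specific to problems like planted matching and does not arise for the MST/MSF.
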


As we will see in the proof, $p_+(\cdot)$ and $p_{-}(\cdot)$ are extinction probabilities of certain branching processes. Notice that the all‐ones function is always a solution to the fixed‐point equations appearing in Theorem~\ref{thm:main}. A natural question is then: How many solutions can there be?

For a Galton–Watson branching process in the supercritical regime, it is well known that the generating function of the offspring distribution has at most two fixed points: the trivial fixed point at $1$, and a nontrivial fixed point in $(0,1)$ that corresponds to the extinction probability of the process. The following proposition shows an analogous phenomenon for the fixed‐point equations in Theorem~\ref{thm:main}: these equations can have at most two solutions, one of which is the all‐ones function. The details of the derivation are presented in Appendix~\ref{app:fixedpoint}.

\begin{proposition}\label{prop:one_nontrivial_solution}
Depending on the value of $s$, the fixed-point equations \eqref{eq:fixed_point_US} and \eqref{eq:fixed_point_HP} admit \emph{at most one} nontrivial solution in $(0,1)$:
\begin{enumerate}[label = (\arabic*)]
    \item \textbf{Uniform spanning tree:} The nontrivial solution is uniquely given by
    \begin{align*}
        p_B(s) = x^*(s),\text{ and }
        p_U(s) = \frac{x^*(s)\,\bigl(1 - F(s)\bigr)}{1 \;-\; x^*(s)\,F(s)},
    \end{align*}
    where $x^*(0)=1$ and for $s>0$, 
    $x^*(s)$ is the unique solution in $(0,1)$ to the fixed-point equation $\phi_s(x) = x$, with
    \begin{align*}
        \phi_s(x) = \exp \Bigg(-(1 - x)\Bigg(\frac{s}{1 - F(s)\,x} + 1\Bigg)\Bigg).
    \end{align*}
    
    \item \textbf{Uniform Hamiltonian path:} If $s > \tfrac{1-F(s)}{1+F(s)}$, then the nontrivial solution is uniquely given by
        \begin{align*}
          p(s) = x^*(s),\text{ and }
          q(s) = x^*(s)\,\bigl(1 - F(s) + F(s)\,p(s)\bigr),
        \end{align*}
        where $x^*(s)$ is the unique solution in $(0,1)$ to the fixed-point equation $\phi_s(x) = x$, with
        \begin{align*}
          \phi_s(x) = \exp \Big(-s\big(1 - x (1 - F(s) + F(s) x )\big)\Big)
          \,\times\, (1 - F(s) + F(s) \, x).
        \end{align*}
        If $s \le \tfrac{1-F(s)}{1+F(s)}$, then $p(s) = q(s) = 1$ is the only solution.
\end{enumerate}
\end{proposition}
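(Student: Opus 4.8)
The plan is to treat each value of $s>0$ separately and reduce the two‑dimensional fixed‑point system to a single scalar equation, then invoke a convexity/monotonicity argument to count its solutions. Consider first the uniform spanning tree case~\eqref{eq:fixed_point_US}. Writing $p_B=b$ and $p_U=u$, the two equations say that $u = b\,(1-F(s)+F(s)u)$, which we can solve for $u$ in terms of $b$ to obtain $u = \frac{b(1-F(s))}{1-F(s)b}$ (valid since $F(s)b<1$ for $b<1$), and hence $1-u = \frac{1-b}{1-F(s)b}$. Substituting this into the first equation $b = \exp(-s(1-u))\exp(-(1-b))$ yields the scalar equation $b = \phi_s(b)$ with
\[
\phi_s(x) = \exp\!\Bigg(-(1-x)\Bigg(\frac{s}{1-F(s)x}+1\Bigg)\Bigg),
\]
exactly as in the statement. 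It then remains to show that for $s>0$ the map $\phi_s$ has exactly one fixed point in $(0,1)$ and that $x=1$ is the only fixed point on the boundary that it also admits; together with the formula relating $u$ and $b$ this pins down the unique nontrivial solution $(p_U,p_B)$.

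To count fixed points of $\phi_s$ on $[0,1]$, the clean route is to study $g(x)=\log\phi_s(x) = -(1-x)\big(\tfrac{s}{1-F(s)x}+1\big)$ and the function $h(x)=\log x - g(x)$ on $(0,1]$. One has $h(1)=0$, and near $0$, $g(0)=-(s+1)<0$ while $\log x\to-\infty$, so $h(0^+)=-\infty<0$. Thus a root in $(0,1)$ exists iff $h$ becomes positive somewhere, and the number of roots is controlled by the number of sign changes of $h'$. The key analytic fact to establish is that $\phi_s$ is \emph{log‑convex} on $[0,1]$ (equivalently $g''>0$): differentiating, $g'(x) = \tfrac{s}{1-F(s)x}+1 - (1-x)\cdot\tfrac{sF(s)}{(1-F(s)x)^2}$, and a direct computation shows $g''(x) = \tfrac{2sF(s)(1-F(s))}{(1-F(s)x)^3} \ge 0$ since $0\le F(s)\le 1$. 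Log‑convexity of $\phi_s$ forces $h=\log x-g$ to be concave, so $h$ has at most two zeros on $(0,1]$; since $h(0^+)=-\infty$ and $h(1)=0$, concavity gives at most one zero in the open interval $(0,1)$, and it is present precisely when $h'(1)<0$, i.e. when $\phi_s'(1)>1$ (a supercriticality condition). This yields the dichotomy: either a unique $x^*(s)\in(0,1)$, or none, in which case $p_B=p_U=1$; and continuity of $x^*$ at $0$ follows from $\phi_0(x)=\mathrm{e}^{-(1-x)}$ having fixed point $1$ and a standard continuity argument.

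For the uniform Hamiltonian path case~\eqref{eq:fixed_point_HP} the reduction is even more direct: the two equations give $q = p\cdot(1-F(s)+F(s)p)$ immediately, so substituting into the first equation produces the scalar fixed‑point equation $p=\phi_s(p)$ with
\[
\phi_s(x) = \exp\!\Big(-s\big(1 - x(1-F(s)+F(s)x)\big)\Big)\,(1-F(s)+F(s)x),
\]
and then $q=x^*(1-F(s)+F(s)x^*)$ as claimed. Again $x=1$ is always a fixed point, and one analyzes $h(x)=\log x-\log\phi_s(x)$ on $(0,1]$: $h(1)=0$, $h(0^+)=-\infty$ (the extra factor $1-F(s)+F(s)x$ stays bounded away from $\infty$ and, if $F(s)<1$, away from $0$; the degenerate case $F(s)=1$ is handled by a limiting argument or directly, and in any case a path vertex always has a planted and an unplanted branch). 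The supercriticality condition is $\phi_s'(1)>1$; computing $\phi_s'(1) = s\cdot(1+F(s)) + F(s) \cdot(\text{correction})$—more precisely $\log\phi_s$ has derivative at $1$ equal to $s(1+F(s))/(1) \cdot(\dots)$—one finds after simplification that $\phi_s'(1)>1 \iff s > \tfrac{1-F(s)}{1+F(s)}$, which is exactly the stated threshold. The curvature computation showing $h$ is concave (equivalently $\log\phi_s$ convex) on $[0,1]$ proceeds as before, now with an extra $-\log(1-F(s)+F(s)x)$ term whose second derivative $\tfrac{F(s)^2}{(1-F(s)+F(s)x)^2}\ge 0$ only helps concavity of $h$, so the at‑most‑one‑interior‑zero conclusion goes through.

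The main obstacle I anticipate is not the reduction to a scalar equation—that is algebra—but establishing the global convexity/concavity statement cleanly enough that the "at most two zeros, hence at most one in the open interval" argument is airtight, including the boundary behavior at $x\to 0^+$ and the degenerate parameter ranges ($F(s)\in\{0,1\}$, and in the Hamiltonian case the exact boundary $s=\tfrac{1-F(s)}{1+F(s)}$ where $x^*$ merges with $1$). A secondary technical point is verifying that the smallest fixed point obtained by iterating from $0$ (as asserted in Theorem~\ref{thm:main}) coincides with the $x^*(s)$ identified here whenever the nontrivial solution exists; this follows because $\phi_s$ is increasing on $[0,1]$ with $\phi_s(0)>0$, so the iterates $\phi_s^{(k)}(0)$ increase to the least fixed point, which by the zero‑count must be $x^*(s)$ in the supercritical regime and $1$ otherwise.
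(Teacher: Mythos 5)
Your reduction to the scalar fixed‑point equation $\phi_s(x)=x$ is identical to the paper's, and your strategy of studying $h(x)=\log x-\log\phi_s(x)$ is a perfectly valid alternative to the paper's direct argument that $\phi_s$ is increasing and convex with $\phi_s'(0)<1$ and $\phi_s'(1)>1$. In the uniform spanning tree case your version is clean: $g''(x)=\tfrac{2sF(s)(1-F(s))}{(1-F(s)x)^3}\ge 0$ is correct, so $\log\phi_s$ is convex and $h$ is strictly concave via the $\log x$ term, giving at most one interior zero.

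There is, however, a genuine gap in your Hamiltonian argument. You assert that the $-\log\bigl(1-F(s)+F(s)x\bigr)$ term in $h$, which has second derivative $\tfrac{F(s)^2}{(1-F(s)+F(s)x)^2}\ge 0$, ``only helps concavity of $h$.'' This is backwards: a term with nonnegative second derivative is convex, and adding a convex summand makes $h$ \emph{less} concave, not more. Relatedly, your parenthetical ``(equivalently $\log\phi_s$ convex)'' is not true in this case: $g''(x)=2sF(s)-\tfrac{F(s)^2}{(1-F(s)+F(s)x)^2}$ can be negative (e.g.\ at $x=1$ when $s<F(s)/2$), so $\log\phi_s$ need not be convex. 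Fortunately the conclusion you want is still true, but you must supply the step that controls the damage: since $0\le F(s)\le 1$ and $x\in(0,1]$, one has $F(s)x\le 1-F(s)+F(s)x$, hence $\tfrac{F(s)}{1-F(s)+F(s)x}\le\tfrac{1}{x}$ and therefore $\tfrac{F(s)^2}{(1-F(s)+F(s)x)^2}\le\tfrac{1}{x^2}$. Plugging this into $h''(x)=-\tfrac{1}{x^2}-2sF(s)+\tfrac{F(s)^2}{(1-F(s)+F(s)x)^2}$ gives $h''(x)\le -2sF(s)\le 0$, which is what you need. With that bound inserted, your argument goes through and agrees with the paper's conclusion, including the threshold $\phi_s'(1)=s(1+F(s))+F(s)>1 \iff s>\tfrac{1-F(s)}{1+F(s)}$.
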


For the unplanted model with $P_n \equiv Q_n$, \cite{Frieze1985Value} showed that the mean weight of the minimum spanning tree, $ \mathbb{E}[w(M_n)]$, converges to $\zeta(3)$ as $n \to \infty$.
We extend this classical result to the planted model, characterizing $\lim_{n\to\infty}  \mathbb{E}[w(M_n)]$.
\begin{theorem}\label{thm:asymcost}
    Suppose the continuous distribution $P$ of the planted edges has a finite mean.
    \begin{enumerate}[label = (\arabic*)]
        \item \textbf{Uniform spanning tree}: Given $p_U(\cdot)$ and $p_B(\cdot)$ as defined in Theorem~\ref{thm:main}, we have
        \begin{flalign*}
            \lim_{n\to\infty}
             \mathbb{E}[w(M_n)]
            =\int_0^\infty s \big(1 - (1 - p_{U}(s))(1 - p_{B}(s))\big) \, dF(s) + \frac{1}{2}\int_0^\infty s \big(1 - (1 - p_{U}(s))^2\big) \, ds.
        \end{flalign*}
        \item \textbf{Uniform Hamiltonian path}: Given $p(\cdot)$ and $q(\cdot)$ as defined in Theorem~\ref{thm:main}, we have
        \begin{flalign*}
            \lim_{n\to\infty}
             \mathbb{E}[w(M_n)]
           = \int_0^\infty s \big(1 - (1 - p(s))^2\big) \, dF(s) + \frac{1}{2}\int_0^\infty s \big(1 - (1 - q(s))^2\big) \, ds.
        \end{flalign*}
    \end{enumerate}
\end{theorem}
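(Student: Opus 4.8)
The plan is to compute $\lim_{n\to\infty}\mathbb{E}[w(M_n)]$ by summing the weights of the MST edges incident to a uniformly random vertex $\root$, splitting this sum into its planted and unplanted parts, and invoking the local weak convergence of $(G_n,M_n^*,M_n)$ established in the course of proving Theorem~\ref{thm:main}. By the $S_n$-invariance of the model and vertex-exchangeability of $G_n$ one has the exact identity
\[
\mathbb{E}[w(M_n)]=\mathbb{E}\!\left[W_{e_0}\,\mathbf{1}\{e_0\in M_n\}\right]+\frac{n}{2(n-1)}\,\mathbb{E}\!\left[\sum_{u:\,\{\root,u\}\notin M_n^*}W_{\{\root,u\}}\,\mathbf{1}\{\{\root,u\}\in M_n\}\right],
\]
where $e_0$ denotes a planted edge, so that $W_{e_0}\sim P$; the two terms isolate, respectively, the contributions of the planted and the unplanted edges of $M_n$.

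For the first term, condition on $W_{e_0}=s$. An edge of $G_n$ lies in $M_n$ exactly when its endpoints are not joined by a path of strictly lighter edges, and the objective-method analysis behind Theorem~\ref{thm:main} in fact shows that this event localizes and that $\mathbb{P}[e_0\in M_n\mid W_{e_0}=s]\to 1-(1-p_-(s))(1-p_+(s))$ for $dF$-a.e.\ $s$, where $p_-(s)$ and $p_+(s)$ are the extinction probabilities of the branching processes describing the two components into which the local limit splits when $e_0$ is deleted. Since $s\bigl(1-(1-p_-(s))(1-p_+(s))\bigr)\le s$ and $\int_0^\infty s\,dF(s)=\mathbb{E}_P[W]<\infty$ by hypothesis, dominated convergence gives
\[
\mathbb{E}\!\left[W_{e_0}\mathbf{1}\{e_0\in M_n\}\right]=\int_0^\infty s\,\mathbb{P}[e_0\in M_n\mid W_{e_0}=s]\,dF(s)\ \longrightarrow\ \int_0^\infty s\bigl(1-(1-p_-(s))(1-p_+(s))\bigr)\,dF(s),
\]
which is the first summand in Theorem~\ref{thm:asymcost} once one substitutes $(p_-,p_+)=(p_U,p_B)$ for the uniform spanning tree and $p_-=p_+=p$ for the uniform Hamiltonian path.

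For the second term I would use the explicit local limit of the model: the skeleton tree (resp.\ the doubly infinite path) carrying i.i.d.\ $P$-weights on its edges, together with an independent Poisson process of unplanted edges at each vertex, each edge leading to a fresh independent copy of the environment, with unit intensity because $\lim_n nQ_n'(0)=1$. In particular the weights of the unplanted edges incident to $\root$ converge to a unit-rate Poisson process on $[0,\infty)$. When an unplanted edge $\{\root,u\}$ of weight $s$ is deleted, both resulting sides are of the ``entered through an unplanted edge'' type — a fresh skeleton-tree root in the uniform spanning tree case, a path vertex with both path-edges intact in the uniform Hamiltonian path case — whose extinction probability is $p_U(s)$, respectively $q(s)$; by the same local computation as above (only the types of the two sides change) one gets $\mathbb{P}[\{\root,u\}\in M_n\mid W_{\{\root,u\}}=s]\to 1-(1-p_U(s))^2$, respectively $1-(1-q(s))^2$. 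Combining the Poisson convergence with this limit and using $\tfrac{n}{2(n-1)}\to\tfrac12$ yields the second summand $\tfrac12\int_0^\infty s\bigl(1-(1-p_U(s))^2\bigr)\,ds$, respectively $\tfrac12\int_0^\infty s\bigl(1-(1-q(s))^2\bigr)\,ds$; adding the two contributions proves the theorem.

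Two points deserve care. The localization of the event $\{e\in M_n\}$, so that it commutes with local weak convergence, is the familiar subtlety of the objective method for minimum spanning trees and is already handled in the proof of Theorem~\ref{thm:main}, so it can be cited here; the genuinely new work is the passage to the limit under the integral sign. For the planted term this is immediate from the finite-mean hypothesis on $P$. For the unplanted term one is summing $\Theta(n)$ contributions whose weights $W_{\{\root,u\}}\sim Q_n$ are unbounded (indeed $\mathbb{E}[Q_n]\to\infty$), and I expect this uniform-integrability control in the tail to be the main obstacle: I would combine the bound $\mathbb{P}[\{\root,u\}\in M_n\mid W_{\{\root,u\}}=s]\le\mathbb{P}[\root\text{ and }u\text{ lie in distinct components of }(G_n)_{<s}]$, where $(G_n)_{<s}$ is the graph on $[n]$ with the edges of weight below $s$, this probability decaying exponentially in $s$ uniformly in $n$ once $s$ exceeds an absolute constant (since $(G_n)_{<s}$ is then highly connected), together with the estimate $\mathbb{E}\bigl[\#\{u:\{\root,u\}\notin M_n^*,\ W_{\{\root,u\}}\le s\}\bigr]=(1+o(1))s$ uniformly for $s$ in compacts, so that the tail of $\int_0^\infty s\,\mathbb{P}[\cdots]\,ds$ is uniformly small and the interchange is legitimate.
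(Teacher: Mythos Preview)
Your proposal is correct and follows essentially the same route as the paper: split the MST weight at a random vertex into its planted and unplanted parts, pass to the local weak limit established for Theorem~\ref{thm:main}, and read off the two integrals from the corresponding extinction probabilities (your identification of $(p_U,p_U)$ and $(q,q)$ on the unplanted side is exactly what the paper does, via the observation that conditioning on an arrival at $s$ in a Poisson process leaves two independent copies of the original rooted environment).

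The only real difference is how you justify the interchange of limit and expectation for the unplanted contribution. You propose a direct tail bound via connectivity of $(G_n)_{<s}$. The paper instead bypasses this computation: it notes that $S_n=\sum_v \ell_n(\root,v)\mathbf{1}\{(\root,v)\in M_n\}$ is uniformly integrable in the \emph{unplanted} model by the Aldous--Steele result, and that the planted portion $\sum_{v}\ell_n(\root,v)\mathbf{1}\{(\root,v)\in M_n\cap M_n^*\}$ is trivially UI (it is dominated by the sum of the $P$-weights on the planted edges at $\root$, whose mean is finite). The unplanted portion of the planted-model sum is then controlled by the unplanted-model result (since every unplanted edge in $M_n$ is also in the MST of the subgraph on unplanted edges only, by the cycle rule), so UI in the planted model follows. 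This is shorter than your connectivity argument and avoids the need for finite-$n$ giant-component estimates, at the cost of importing the Aldous--Steele UI statement as a black box; your approach would be more self-contained but requires making the ``exponential decay uniformly in $n$'' claim precise.
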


Figure~\ref{fig:planted-simuvsth} compares the theoretical results with numerical simulations, where the planted distribution $P$ follows an exponential distribution with mean $\mu$. Observe that the algorithm's performance is remarkably close across the two cases, though not identical. Closer inspection reveals that the overlap between the uniform spanning tree and the MST is greater for small values of $\mu$, whereas the overlap between the uniform Hamiltonian path and the MST is greater for large values of $\mu$. 
Furthermore, as $\mu \to 0$, the overlap converges to $1$, while the mean weight of the MST converges to $0$. Conversely, as $\mu \to \infty$, the overlap vanishes, and the mean weight of the MST approaches $\zeta(3) \approx 1.202$. See Table~\ref{table:datapoints} in Appendix~\ref{app:table} for exact data points corresponding to Figure~\ref{fig:planted-simuvsth}.

Another observation is the absence of a phase transition in the planted tree problem when using the MST algorithm. As shown in Figure~\ref{fig:planted-simuvsth}, the overlap converges to $1$ as $\mu \to 0$. However, for any $\mu > 0$, we have $\lim_{n \to \infty} \mathbb{E}[\text{overlap}(M_n, M_n^*)] < 1$. In contrast, the planted matching problem~\cite{MoharramiMoorXu2021Planted,DingWuXuYang2023Planted} exhibits a phase transition at the  critical value  of $1/4$: below this threshold, almost perfect recovery is achievable using the minimum matching algorithm; whereas at or above this threshold, almost perfect recovery becomes information-theoretically impossible for any algorithm.

\begin{figure}[h!]
    \centering
    \begin{subfigure}[b]{0.48\textwidth}
        \centering
        \includegraphics[width=\textwidth]{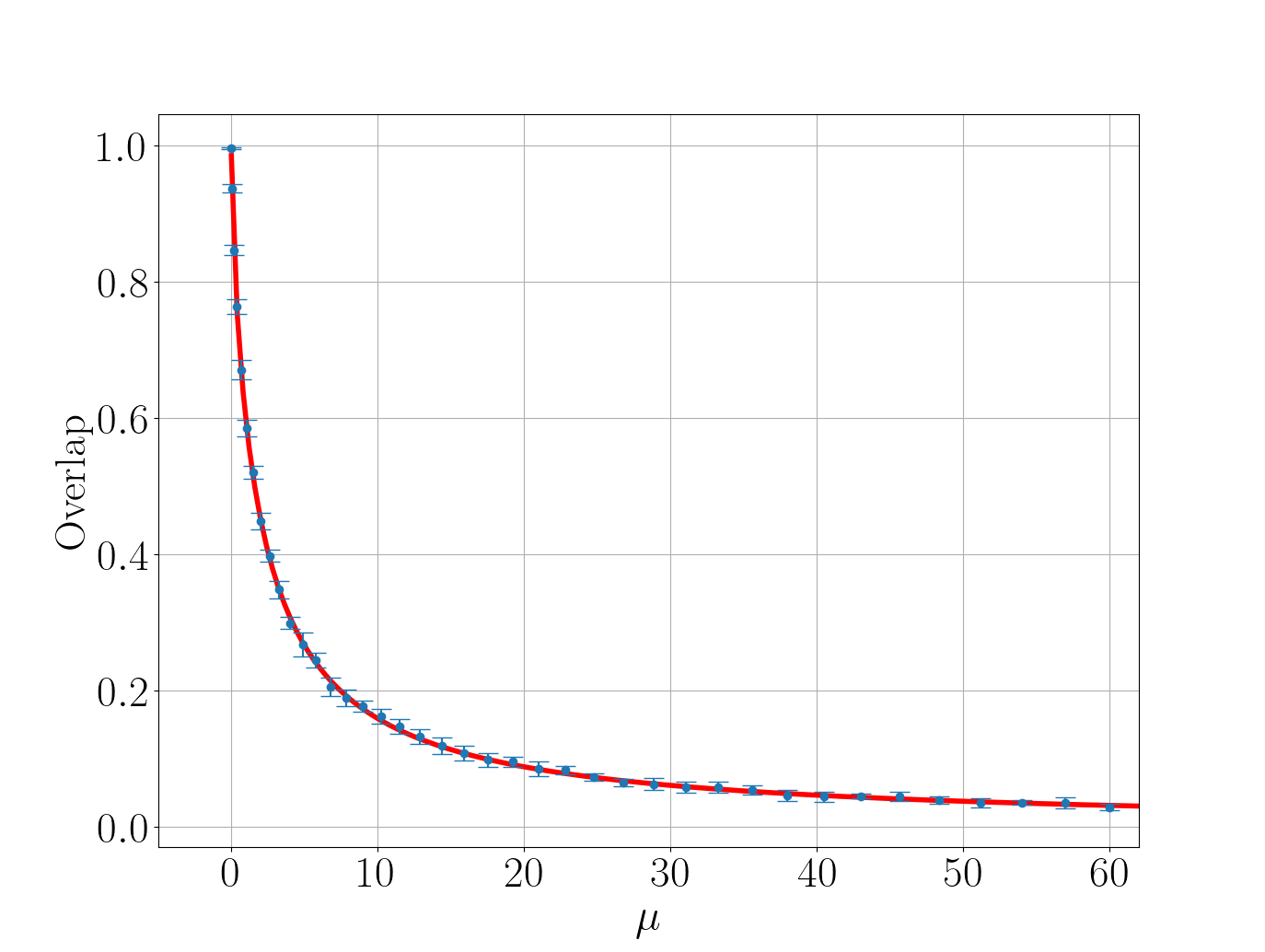}
        \caption{Overlap for uniform Hamiltonian path}
    \end{subfigure}
    \hfill
    \begin{subfigure}[b]{0.48\textwidth}
        \centering
        \includegraphics[width=\textwidth]{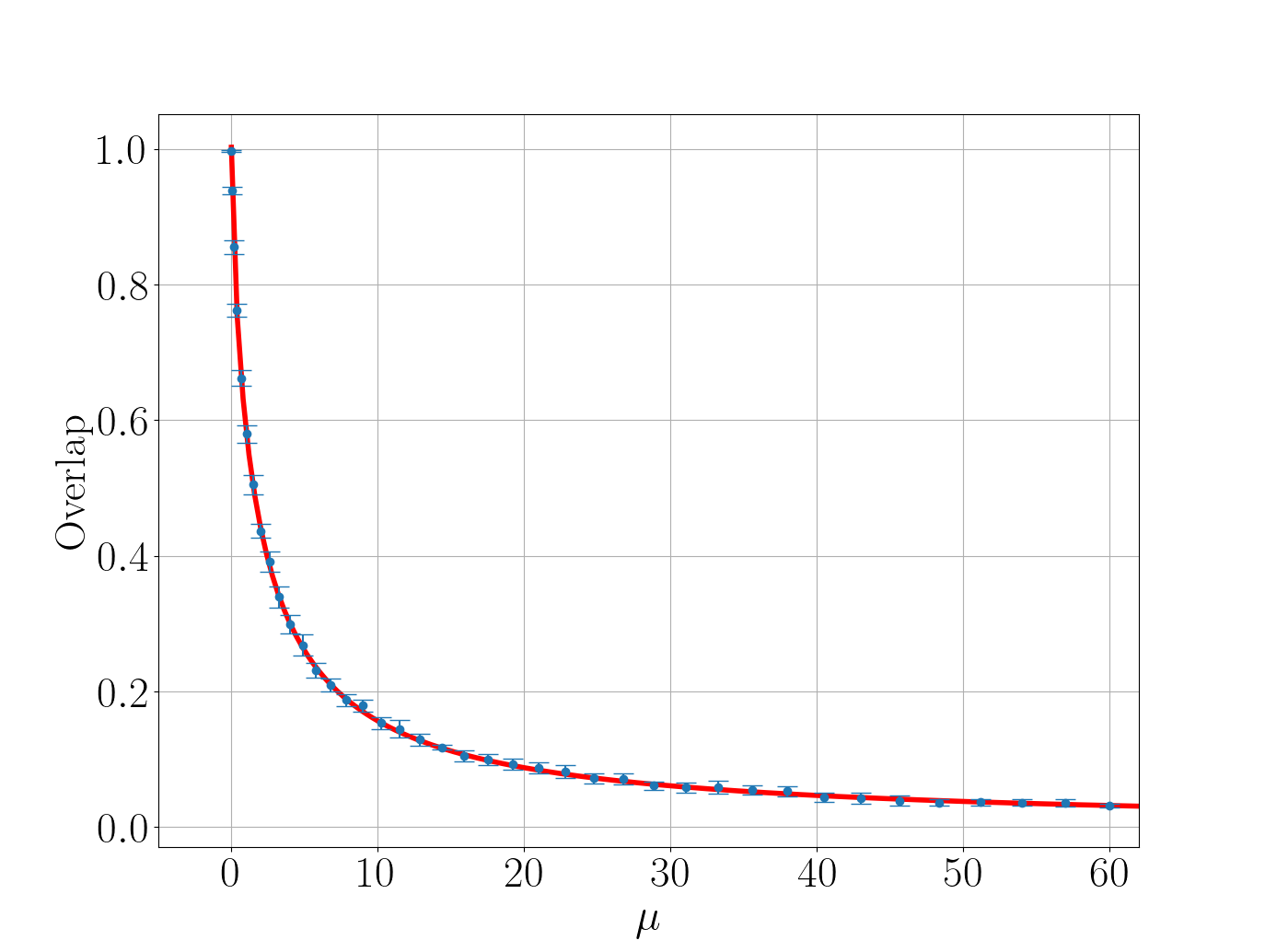}
        \caption{Overlap for uniform spanning tree}
    \end{subfigure}\\
    \begin{subfigure}[b]{0.48\textwidth}
        \centering
        \includegraphics[width=\textwidth]{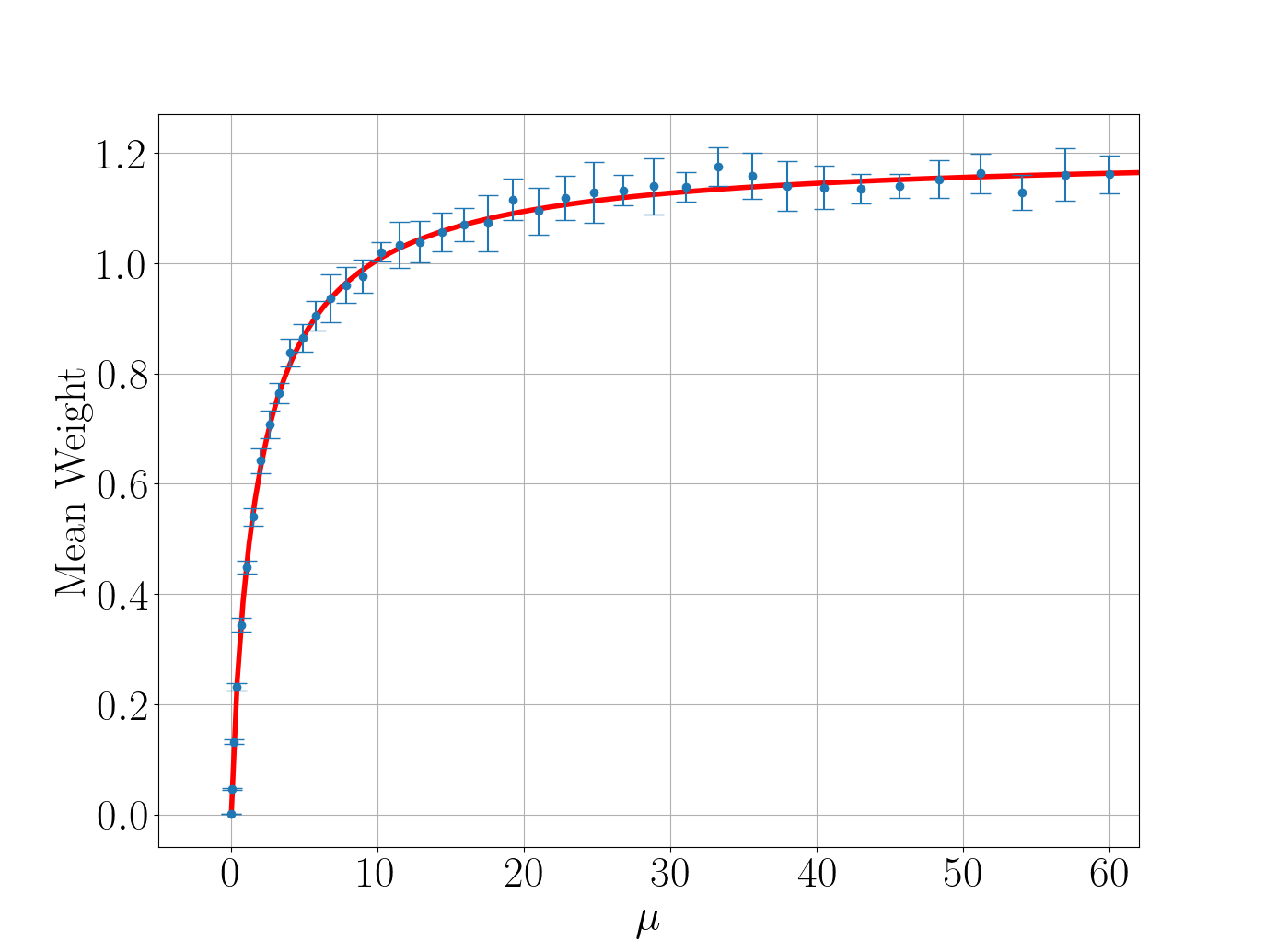}
        \caption{$w(M_n)$ for uniform Hamiltonian path}
    \end{subfigure}
    \hfill
    \begin{subfigure}[b]{0.48\textwidth}
        \centering
        \includegraphics[width=\textwidth]{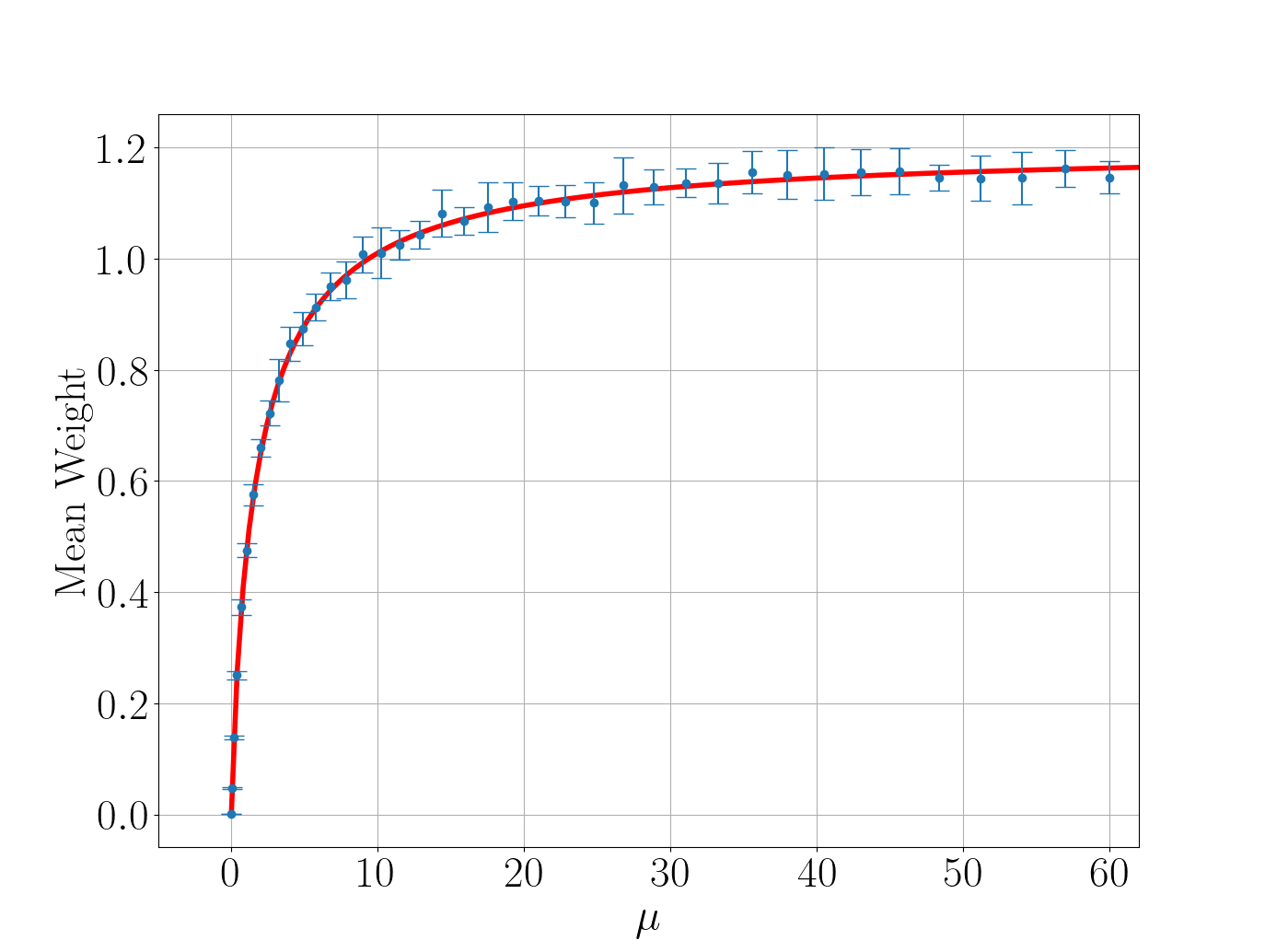}
        \caption{$ w(M_n)$ for uniform spanning tree}
    \end{subfigure}
    \caption{The solid red line represents the theoretical values obtained by numerically solving the fixed-point equations and then taking numerical integration. The blue dots indicate the empirical means of the corresponding quantity, computed from complete graphs generated from the planted exponential model. The error bars depict the standard deviation of the distribution, highlighting that the empirical quantities are tightly concentrated around their asymptotic expected values.}
    \label{fig:planted-simuvsth}
\end{figure}

Leveraging Theorem~\ref{thm:asymcost}, we design a test based on $w(M_n)$ to distinguish between the planted model and the unplanted one. Specifically, consider the following null and alternative hypotheses: 
\begin{align}
\mathcal{H}_0: P_n \equiv Q_n,   \qquad \mathcal{H}_1: P_n \equiv P.
\end{align}
Given that $G_n$ is generated under either $\mathcal{H}_0$ or $\mathcal{H}_1$, our goal is to determine the underlying hypothesis based on the observation of $G_n$. 

For a given positive constant $\epsilon>0$, define a test $\phi: G_n \to \{0,1\}$ that outputs $0$ if $w(M_n) \ge \zeta(3)-\epsilon$ and $1$ otherwise. The following theorem establishes that the sum of  Type-I and Type-II errors of this test vanishes as $n \to \infty.$

\begin{theorem}[Hypothesis Testing]\label{thm:hypothesis_testing}
Assume $\lim_{n\to\infty} \mathbb{E}_{\mathcal{H}_1}[w(M_n)] \le \zeta(3) -  2\epsilon$ for some positive constant $\epsilon>0.$ 
Then 
$$
\lim_{n\to \infty} \left(\mathbb{P}_{\mathcal{H}_0}\left\{ 
\phi(G_n)=1
\right\}
+ \mathbb{P}_{\mathcal{H}_1}\left\{ 
\phi(G_n)=0
\right\} \right) =0.
$$
\end{theorem}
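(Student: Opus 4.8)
The plan is to reduce Theorem~\ref{thm:hypothesis_testing} to a single concentration estimate for $w(M_n)$ and then to prove that estimate. Unwinding the definition of $\phi$, the test errs under $\mathcal{H}_0$ precisely when $w(M_n)<\zeta(3)-\epsilon$ and under $\mathcal{H}_1$ precisely when $w(M_n)\ge\zeta(3)-\epsilon$. Under $\mathcal{H}_0$ the observed graph is the unplanted model, so $\mathbb{E}_{\mathcal{H}_0}[w(M_n)]\to\zeta(3)$ by \cite{Frieze1985Value}; hence $\mathbb{E}_{\mathcal{H}_0}[w(M_n)]>\zeta(3)-\epsilon/2$ for all large $n$, and the Type-I error is at most $\mathbb{P}_{\mathcal{H}_0}\{|w(M_n)-\mathbb{E}_{\mathcal{H}_0}[w(M_n)]|>\epsilon/2\}$. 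Under $\mathcal{H}_1$, Theorem~\ref{thm:asymcost} (which applies since $P$ has finite mean) guarantees that $\mathbb{E}_{\mathcal{H}_1}[w(M_n)]$ converges, and by hypothesis its limit is at most $\zeta(3)-2\epsilon$; hence $\mathbb{E}_{\mathcal{H}_1}[w(M_n)]<\zeta(3)-\tfrac{3}{2}\epsilon$ for all large $n$, so $\zeta(3)-\epsilon>\mathbb{E}_{\mathcal{H}_1}[w(M_n)]+\epsilon/2$ and the Type-II error is at most $\mathbb{P}_{\mathcal{H}_1}\{|w(M_n)-\mathbb{E}_{\mathcal{H}_1}[w(M_n)]|>\epsilon/2\}$. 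It therefore suffices to show that, under either hypothesis, $w(M_n)-\mathbb{E}[w(M_n)]\to 0$ in probability, for which (by Chebyshev) it is enough to prove $\mathrm{Var}(w(M_n))=o(1)$.

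For the concentration I would write the MST cost as an integral of component counts. Let $G_n(t)$ be the subgraph of $G_n$ on $[n]$ spanned by the edges of weight at most $t$, and let $\kappa(t)$ be its number of connected components. Because the MST restricted to edges of weight at most $t$ induces the same partition of $[n]$ into components as $G_n(t)$, we have $|\{e\in M_n:W_e>t\}|=\kappa(t)-1$, and hence (the weights being a.s.\ distinct) $(n-1)\,w(M_n)=\sum_{e\in M_n}W_e=\int_0^\infty(\kappa(t)-1)\,dt$. By the integral form of Minkowski's inequality, $\big\|(n-1)w(M_n)-\mathbb{E}[(n-1)w(M_n)]\big\|_2\le\int_0^\infty\sqrt{\mathrm{Var}(\kappa(t))}\,dt$, so it is enough to exhibit a function $v$ with $\int_0^\infty\sqrt{v(t)}\,dt<\infty$ such that $\mathrm{Var}(\kappa(t))\le n\,v(t)$ for all large $n$; this yields $\mathrm{Var}(w(M_n))=O(1/n)$.

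To bound $\mathrm{Var}(\kappa(t))$ I would use a vertex-exposure Doob martingale (for fixed $t$ the edge weights are independent): reveal the weights of the edges incident to vertices $1,\dots,n$ one vertex at a time, so that $\mathrm{Var}(\kappa(t))=\sum_i\mathbb{E}[D_i^2]$. The increment $D_i$ is controlled by the number of already-exposed components that vertex $i$ merges, which is at most $\min\{\deg_{[i-1]}(i),\ \kappa(G_n(t)[\{1,\dots,i-1\}])\}$. For $t$ of constant order the degrees into the exposed set are $O(1)$ in the relevant moments, giving $\sum_i\mathbb{E}[D_i^2]=O(n)$; once $t$ grows past a constant a giant component has formed, which forces the second term in the minimum to be $O(1)$ no matter how large the degree is, and the resulting envelope decays (exponentially) in $t$; finally for $t\gg\log n$ the graph $G_n(t)$ is connected with overwhelming probability so $\kappa(t)=1$, the negligible remainder of the integral being bounded by a crude tail estimate using the finite mean of $P$ and the tail of $Q_n$. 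Assembling the three regimes produces the required $v$. Under $\mathcal{H}_1$ nothing changes: conditioning on $M_n^*$ leaves the edge weights independent and the planted edges only accelerate connectivity, so the same bounds hold for $\mathrm{Var}(w(M_n)\mid M_n^*)$, while the extra variability coming from the random $M_n^*$ is lower order (it can be controlled either by generating $M_n^*$ jointly with the weights and incorporating it into the martingale, or via the local-weak-convergence estimates already established in the paper). Combining this with the previous steps gives $\mathrm{Var}(w(M_n))=O(1/n)$ under both hypotheses, and with the reduction above this proves the theorem.

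The main obstacle is the last step: making the bound $\mathrm{Var}(\kappa(t))\le n\,v(t)$ hold uniformly in $t$ with $\sqrt{v}$ integrable, especially across the critical window $t=\Theta(1)$ where the giant component emerges and in the far tail $t\gg\log n$, where the assumption $nQ_n'(0)\to1$ by itself does not pin down $Q_n$ and a mild regularity condition on $Q_n$ is needed (for instance $Q_n(t)\gtrsim t/n$ for $t$ up to $\mathrm{polylog}(n)$, which holds for the exponential model). An alternative that avoids this bookkeeping is to invoke a known concentration result for the weight of the minimum spanning tree of $K_n$ with i.i.d.\ edge weights and to check that its proof is insensitive to reweighting the $n-1$ edges of the planted tree.
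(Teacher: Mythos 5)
Your reduction to a concentration statement is correct and mirrors the paper's: under $\mathcal{H}_0$ the mean of $w(M_n)$ tends to $\zeta(3)$ by \cite{Frieze1985Value}, under $\mathcal{H}_1$ it is eventually below $\zeta(3)-\tfrac{3}{2}\epsilon$ by hypothesis, so it suffices that $w(M_n)$ concentrate around its mean at scale $o(1)$ under both hypotheses. The route you then take, however, is genuinely different from the paper's and has a real gap. The paper writes $w(M_n)=f(W)=\frac{1}{n-1}\min_{M\in\mathcal{M}}\langle W,M\rangle$, observes that $f$ is $1/\sqrt{n-1}$--Lipschitz (and concave/convex in the sense Talagrand requires), truncates the weights at $\tau_n=16\log n$ so that the coordinates are bounded while changing $f$ only on an $O(n^{-3})$--probability event (since the truncated graph is still connected), and then applies Talagrand's concentration inequality for Lipschitz convex functions of independent bounded variables. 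That yields Gaussian-type concentration at scale $\tau_n/\sqrt{n}$ in a few lines.

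Your alternative — the representation $(n-1)w(M_n)=\int_0^\infty(\kappa(t)-1)\,dt$ and Minkowski's integral inequality reducing matters to a pointwise-in-$t$ variance bound $\mathrm{Var}(\kappa(t))\le n\,v(t)$ with $\int_0^\infty\sqrt{v(t)}\,dt<\infty$ — is a legitimate strategy in principle, but the crucial estimate is not carried out, and the sketch you give for it contains a claim that is simply false: for $t$ a constant larger than one, the exposed graph $G_n(t)[\{1,\dots,i-1\}]$ is (roughly) a supercritical Erd\H{o}s--R\'enyi graph, which has $\Theta(i)$ connected components even though it contains a giant, so the second term in your minimum is emphatically not $O(1)$. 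What actually controls $D_i$ in that regime is not the total component count but the number of \emph{distinct} components vertex $i$'s edges land in, and making that quantitative uniformly in $t$ (with $\sqrt{v(t)}$ integrable across the critical window $t=\Theta(1)$ and out to $t\approx\log n$) is precisely the hard work you acknowledge you have not done. As it stands the proposal does not produce a proof; to close the gap you would either need to prove the $\mathrm{Var}(\kappa(t))$ envelope (essentially re-deriving a Janson-type variance estimate for the MST weight, then checking it survives reweighting the $n-1$ planted edges), or — as you suggest as a fallback — cite such a result, or simply switch to the Talagrand-plus-truncation argument, which sidesteps all the $t$-by-$t$ bookkeeping and needs only convexity, the $1/\sqrt{n-1}$ Lipschitz constant, and connectivity of the truncated graph.
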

The proof follows by 
applying Talagrand's concentration inequality with a proper truncation on the edge weights. See Appendix~\ref{sec:proof_hypothesis} for details. 

\section{Proof of the Asymptotic Overlap and Mean Weight} \label{sec:proof}
    The proof leverages the local weak convergence framework developed in \cite{AldousSteel2001Objective,Steele2002Minimal} for analyzing the average edge weight of the minimum spanning tree in a randomly weighted complete graph $K_n$, where the edge weights are drawn from an exponential distribution with mean $n$. Specifically, this approach establishes the asymptotic value as $\zeta(3)$,  providing a novel perspective on a classical result~\cite{Frieze1985Value}. 

    In this section, we present the main idea behind the proof and refer the reader to the appendix for a rigorous treatment of the underlying concepts. The proof of the theorem proceeds in three steps:
    \begin{enumerate}
        \item \textbf{Convergence of the Planted Graph:} We first show that the planted graph $G_n$ converges in the local weak sense to an infinite tree $T_\infty$. This builds on similar results established for the planted matching problem~\cite{MoharramiMoorXu2021Planted}. As $n \to \infty$, the complete randomly weighted graph $K_n$
        converges to the Poisson Weighted Infinite Tree (PWIT)~\cite{AldousSteel2001Objective}; a uniform Hamiltonian path in $K_n$ converges to an infinite line, and a uniform spanning tree in $K_n$ converges to the skeleton tree~\cite{Aldous1991Asymptotic,Grimmett1980Random,AldousSteel2001Objective}. We rely on these results to establish the convergence of the planted model.
        \item \textbf{Convergence of the Minimum Spanning Tree:} Next, we show that the minimum spanning tree $M_n$ converges in the local weak sense to the \textit{minimum spanning forest} of $T_\infty$. This step relies on the results of~\cite{Steele2002Minimal}, which shows that the local weak convergence of a sequence of finite graphs to an asymptotic object implies the local weak convergence of the minimum spanning trees of the finite graphs to the minimum spanning forest of the asymptotic object.
  We sketch the core idea of defining the minimum spanning forest on $T_\infty$ and demonstrate its connection to the minimum spanning tree of the planted graph $G_n$. 
        \item \textbf{Detection Probability and Mean Weight:} Finally, we characterize the probability that a planted edge belongs to the minimum spanning forest of $T_\infty$ using fixed-point equations. This characterization enables us to derive the asymptotic overlap between the minimum spanning tree $M_n$ and the planted structure $M_n^*$. We then extend our analysis to calculate the asymptotic mean weight of $M_n$.
    \end{enumerate} 
    
    We adopt the notation from~\cite{MoharramiMoorXu2021Planted}, where $(\ell^\square_n, G^\square_n)$ denotes a planted model with $n$ vertices. More specifically, $G^\square_n = (V^\square_n, E^\square_n)$ is a finite graph with vertex set $V^\square_n$ and edge set $E^\square_n$, and $\ell^\square_n \colon E^\square_n \to \mathbb{R}_{\ge 0}$ is a weight function assigning nonnegative weights to each edge. The corresponding infinite-limit tree is denoted by $(\ell^\square_\infty, T^\square_\infty, \root)$, where $\root\in V^\square_\infty$ is the designated root vertex. Here, $\square \in \{S, H\}$ indicates the type of planted model: ``S'' for the uniform spanning tree model and ``H'' for the uniform Hamiltonian path model. Specifically, we denote the planted Hamiltonian path in $G^H_n$ by $M_n^{H,*}$ and the planted spanning tree in $G^S_n$ by $M_n^{S,*}$.

    Next, we introduce the notion of local weak convergence used in our analysis. This convergence captures the local structure around a randomly selected vertex, where ``local'' refers to the neighborhood of any fixed radius, measured by the total sum of edge weights (interpreted as distance). A precise formulation can be found in Appendix~\ref{app:localweakframework}.
    \begin{definition}[Local Weak Convergence (Informal)]
        A sequence of random finite weighted graphs $(\ell^\square_n, G^\square_n)$ with $n$ vertices converges locally weakly to a random infinite rooted tree $(\ell^\square_\infty, T^\square_\infty, \root)$ if, upon choosing a vertex uniformly at random in $(\ell^\square_n, G^\square_n)$, the neighborhood of that vertex (up to any fixed radius) converges in distribution to the neighborhood of the distinguished root $\root$ in $(\ell^\square_\infty, T^\square_\infty, \root)$.
    \end{definition}
    
    \subsubsection*{Step 1: Convergence of the Planted Graph.}
    We begin by defining the random infinite rooted trees $(\ell^H_\infty, T^H_\infty, \root)$ and $(\ell^S_\infty, T^S_\infty, \root)$. To determine the structure of the infinite tree, we choose a vertex uniformly at random in $(\ell^\square_n, G^\square_n)$  as the root, and analyze the convergence of its local neighborhood, defined in terms of $\ell^\square_n$. 

    First, let us define the infinite rooted tree $(\ell^H_\infty, T^H_\infty, \root)$ illustrated in Figure~\ref{fig:plantedH}. The key insight is that each vertex should have exactly two \emph{planted} neighbors, corresponding to the Hamiltonian path in the finite model. Specifically, the root vertex $\root$ is connected to two planted children by bold red edges and to an infinite sequence of \emph{unplanted} children by solid blue edges. We label the planted children of the root by $\{\tilde{1}, \tilde{2}\}$, and the unplanted children by $\{1, 2, 3, \dots\}$. For a planted child $\tilde{i} \in \{\tilde{1}, \tilde{2}\}$, the edge weight $\ell^H_\infty(\root,\tilde{i})$ is sampled independently from $P$, whereas for an unplanted child $i \in \{1,2,\ldots\}$, the edge weight $\ell^H_\infty(\root,i)$ is set to be the $i$th arrival time of a rate-$1$ Poisson process (The choice of rate-$1$ stems from the assumption that $\lim_{n\to \infty} nQ_n'(0)=1$). This construction mirrors the descendant structure of the root vertex in the PWIT~\citep{AldousSteel2001Objective}. 

    For non-root vertices, the descendant distribution depends on whether a vertex is planted or unplanted. An unplanted vertex has the same distribution for its descendants as the root; namely, it has two planted children (with edge weights from $P$) and infinitely many unplanted children (with edge weights from a Poisson process). A planted vertex, on the other hand, has exactly one planted child (with an edge weight drawn from $P$) and an infinite sequence of unplanted children (with edge weights determined by the arrivals of a rate-$1$ Poisson process). Figure~\ref{fig:plantedH} depicts this structure and the labeling scheme: planted children inherit their parent’s label with $\tilde{1}$ or $\tilde{2}$ appended, while unplanted children are labeled by the order in which they appear in the Poisson process.

    \begin{figure*}[h!]
    \centering
    \includegraphics[width=0.8\textwidth]{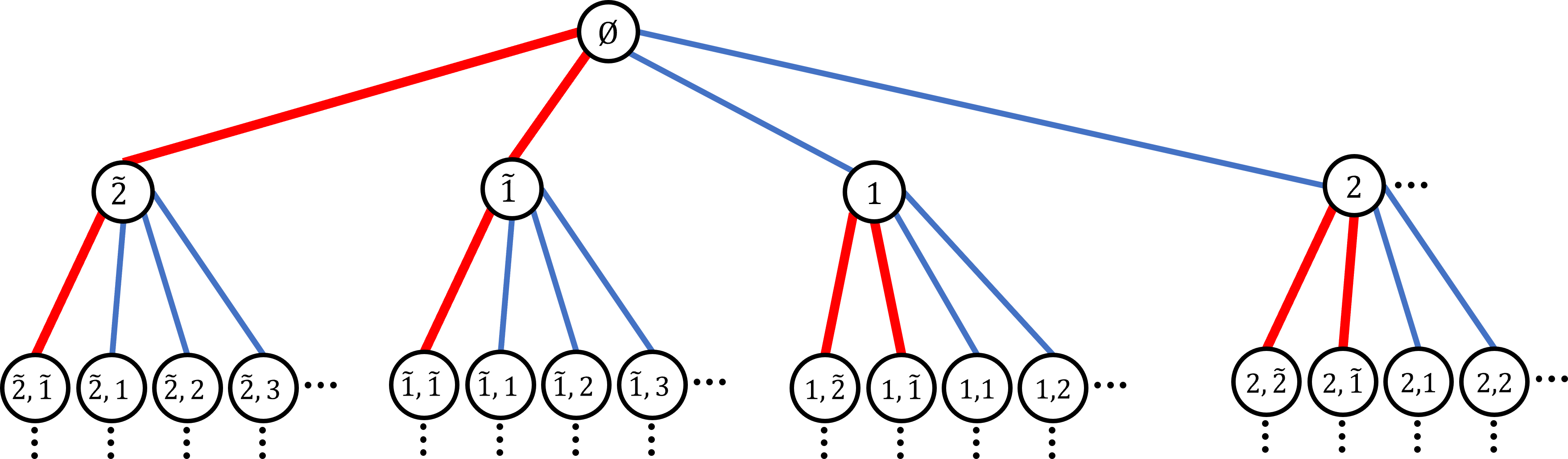}
    \caption{Structure of $(\ell^H_\infty, T^H_\infty)$. Planted edges are depicted in bold red, while unplanted edges are shown in solid blue. The root vertex is denoted by $\root$. A planted child is labeled by appending $\tilde{1}$ or $\tilde{2}$ to its parent's label, while an unplanted child's label ends with an integer corresponding to its order of arrival in the Poisson process.}
    \label{fig:plantedH}
    \end{figure*}
    Next, we define the infinite tree $(\ell^S_\infty, T^S_\infty)$, illustrated in Figure~\ref{fig:plantedS}. The key insight is that each connected component of this infinite tree after removing the unplanted edges is a skeleton tree, as characterized in~\cite{Aldous1991Asymptotic, Grimmett1980Random}. Let $\root$ be the root vertex. The root has one planted child labeled $\tilde{0}$ and an additional $k \ge 0$ planted children labeled $\{\tilde{1}, \tilde{2}, \ldots, \tilde{k}\}$, where $k$ is drawn from a Poisson($1$) distribution. 
     In addition, the root is connected to an infinite sequence of unplanted vertices labeled $\{1, 2, 3, \ldots\}$. The weights of the planted edges (connecting $\root$ to its planted children) are i.i.d.\ samples from the distribution $P$, while the weights of the unplanted edges (connecting $\root$ to $\{1, 2, 3, \ldots\}$) are determined by the arrival times of a rate-$1$ Poisson process.
    
    As in the previous case, the descendant distribution for any non-root vertex depends on its type. For vertices whose labels end with a non-tilded integer or $\tilde{0}$, the descendants have the same distribution as those of the root. For all other vertices, the distribution is similar to that of the root but excludes a child labeled $\tilde{0}$. Figure~\ref{fig:plantedS} illustrates the structure of $(\ell^S_\infty, T^S_\infty)$ and the labeling rules: a planted child's label is formed by appending a tilded integer to its parent's label, whereas an unplanted child's label is determined by appending a positive integer corresponding to its order in the Poisson process.

    \begin{figure*}[h!]
    \centering
    \includegraphics[width=0.99\textwidth]{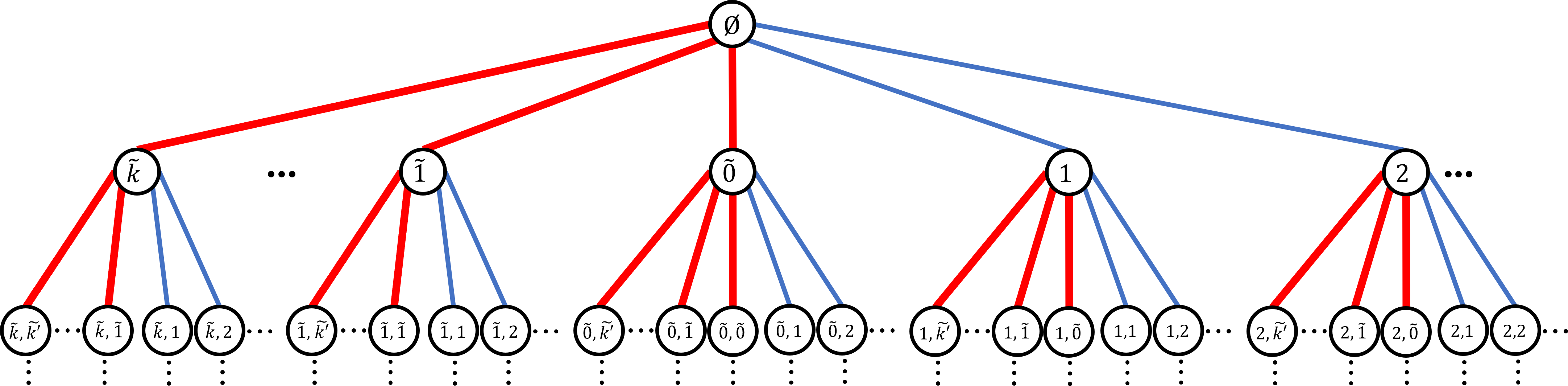}
    \caption{Structure of $(\ell^S_\infty, T^S_\infty)$. Planted edges are depicted in bold red, while unplanted edges are shown in solid blue. The root vertex is denoted by $\root$. A planted child is labeled with a tilded integer appended to its parent's label, while an unplanted child is labeled with its arrival order in the Poisson process. $k\geq 0$ and $k'\geq 0$ are independently drawn from a Poisson distribution with parameter 1, and abusing the notation, the same symbols are reused for the descendants of different nodes.
    }
    \label{fig:plantedS}
    \end{figure*}

    For a uniformly chosen vertex, short cycles (measured in total edge weight) vanish asymptotically, causing the graph to converge locally to a tree. This observation underpins the concept of local weak convergence. We conclude the first step of the proof by stating that the finite graph models converge in the local weak sense to their infinite-tree counterparts. A detailed proof can be found in Appendix~\ref{app:localweakconv}.
    
    \begin{lemma}[Local Weak Convergence of Finite Graphs] \label{thm:localweakconv}
        The sequence $\{(\ell^H_n, G^H_n)\}_n$ converges locally weakly to $(\ell^H_\infty, T^H_\infty, \root)$, and $\{(\ell^S_n, G^S_n)\}_n$ converges locally weakly to $(\ell^S_\infty, T^S_\infty, \root)$.
    \end{lemma}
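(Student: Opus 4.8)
The plan is to verify local weak convergence directly from the definition: pick a root uniformly at random in $G^\square_n$, and show that the weighted rooted neighborhood of any fixed radius $R$ (measured in summed edge weights) converges in distribution to the corresponding neighborhood of $\root$ in $T^\square_\infty$. I would first dispose of the unplanted edges. Conditioning on the planted structure, the unplanted edge weights incident to a fixed vertex $v$ are i.i.d.\ $Q_n$, so the collection of rescaled weights $\{n^{-1} \cdot (\text{rank statistics})\}$, or equivalently the point process of weights below any fixed threshold $t$, converges to a Poisson process of rate $\lim_{n\to\infty} n Q_n'(0) = 1$ on $[0,t]$; this is the standard PWIT computation from \cite{AldousSteel2001Objective}, and the independence across vertices follows because with high probability the $R$-neighborhood touches only $o(\sqrt n)$ vertices, so the relevant unplanted edges are asymptotically disjoint and no two of them close a short cycle (a second-moment / birthday-type argument bounds the probability that the explored neighborhood is not a tree by $O(\text{poly}(R)/n) \to 0$). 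This reduces the problem to showing that the planted structure, as seen from a uniformly random root, converges: a uniform Hamiltonian path converges to the bi-infinite line (so the root has exactly two planted neighbors, each planted non-root vertex exactly one planted child), and a uniform spanning tree converges to the skeleton tree.

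For the Hamiltonian-path case this is elementary: a uniform random vertex of a path of length $n$ is, with probability $1 - O(R/n)$, at distance $>R$ from both endpoints, so its planted $R$-neighborhood is deterministically a path segment of length $2R$ rooted at its center — exactly the planted skeleton of $T^H_\infty$. For the uniform spanning tree I would invoke the classical local-limit result of \cite{Aldous1991Asymptotic, Grimmett1980Random}: the uniform spanning tree of $K_n$, rooted at a uniformly random vertex, converges locally weakly to the skeleton tree, i.e.\ a unique ray to infinity with an independent $\mathrm{Poisson}(1)$ Galton--Watson tree hanging off each vertex of the ray. The labeling in Figure~\ref{fig:plantedS} (the child $\tilde 0$ being the next vertex along the ray, the $\mathrm{Poisson}(1)$ many children $\tilde 1,\dots,\tilde k$ being the roots of the hanging Galton--Watson trees) is precisely a bookkeeping of this object, and the fact that non-ray vertices have the distribution of the root minus the $\tilde 0$-child is exactly the recursive description of a Galton--Watson tree.

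Finally I would combine the two pieces. Because the planted edges have weights drawn from the fixed continuous distribution $P$ (independently of the unplanted weights and of the planted combinatorial structure), one can graft the Poisson-process families of unplanted children onto every vertex of the converging planted skeleton, and the joint weighted neighborhood of radius $R$ converges in distribution to that of $T^\square_\infty$; continuity of $P$ guarantees no ties and that the map ``combinatorial neighborhood plus edge weights $\mapsto$ truncated-at-radius-$R$ rooted weighted graph'' is a.s.\ continuous, so the convergence passes to the metric on rooted weighted graphs used in Appendix~\ref{app:localweakframework}. Since $R$ is arbitrary, this is local weak convergence. The main obstacle is the first step: establishing that the unplanted edges contribute an \emph{independent} rate-$1$ Poisson family at every vertex of the neighborhood and that they create no short cycles, uniformly in the (random) planted structure being explored — this is where one needs the careful exploration/coupling argument and the second-moment bound on cycle-closing, and it is the part deferred to Appendix~\ref{app:localweakconv}. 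I expect the planted-structure convergence itself to be comparatively routine, being either trivial (Hamiltonian path) or a direct citation (uniform spanning tree).
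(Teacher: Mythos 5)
Your overall structure matches the paper's: couple the planted combinatorial structure and the unplanted Poisson arrivals, show no short cycles form, and verify the joint neighborhood converges. The Hamiltonian case you sketch is essentially identical to the paper's Lemma~\ref{lem:lwc_Hamiltonian}.

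However, you have the emphasis reversed on where the real work lies. You flag the unplanted/Poisson side as ``the main obstacle'' and describe the uniform-spanning-tree side as ``a direct citation'' of \cite{Aldous1991Asymptotic,Grimmett1980Random}. The paper does the opposite: it delegates the Poisson-process argument (short-cycle bound, convergence of order statistics) to \cite{MoharramiMoorXu2021Planted}, since it is nearly identical to the planted-matching case, and spends the bulk of Appendix~\ref{app:localweakconv} reproving a \emph{conditional} version of the Grimmett--Aldous skeleton-tree limit. This is not optional. The single-root skeleton-tree result only describes the planted neighborhood of one uniformly chosen vertex. During the exploration you also reach fresh vertices via unplanted edges, and their planted neighborhoods are not independent of the already-revealed planted structure --- everything is carved out of the same random tree $M_n^*$. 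The explored subgraph $\mathcal H_t$ at any generation is a forest of $k$ disjoint planted components, and what you need is the conditional law of the next-generation planted degrees given $\mathcal H_t$, which should factorize into independent size-biased $\mathrm{Poisson}(1)$'s with the $\tilde 0$-label assigned uniformly within each component. The paper establishes exactly this via a Pr\"ufer-code computation (Lemmas~\ref{lem:noneighbor} and~\ref{lem:identity} and the ratio of spanning-tree counts in the proof of Lemma~\ref{lem:lwc_Uniform}), showing the conditional descendant distribution matches the infinite tree's. Your ``graft fresh skeleton trees onto vertices discovered through unplanted edges'' step is precisely the asymptotic-independence claim that this computation justifies; asserting it by citation of the one-root limit leaves a genuine gap. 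To close it without redoing the combinatorics you would need, at minimum, a multi-rooted or Benjamini--Schramm-with-marks version of the skeleton-tree convergence, and that is not what the cited references state.
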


    \subsubsection*{Step 2: Convergence of the Minimum Spanning Tree.}
    Given that $\{(\ell^\square_n, G^\square_n)\}_n$ converges locally weakly to $(\ell^\square_\infty, T^\square_\infty, \root)$, it is natural to expect that the corresponding spanning trees $M_n^\square$ also converge (in the local weak sense) to a subgraph $M_\infty^\square$ of $T^\square_\infty$. Because each $M_n^\square$ spans all vertices of $G^\square_n$, the subgraph $M_\infty^\square$ should similarly span every vertex of $T^\square_\infty$. Moreover, since $T^\square_\infty$ is a tree, any subgraph of it must be a forest. Hence, $M_\infty^\square$ is a forest spanning $T^\square_\infty$.

    To characterize $M_\infty^\square$, consider any edge $e_n = \{u_n, v_n\}\notin M_n^\square$. 
    If all edges in $G_n$ with weights at least $\ell_n^\square(e_n)$ are removed, there still exists a path connecting $u_n$ and $v_n$ in the truncated graph. 
    This observation is crucial for understanding $M_\infty^\square$: since $T_\infty^\square$ is acyclic, the path between $u_n$ and $v_n$ must appear as paths extending to infinity in the limiting object.

    More precisely, suppose the local structure around $e_n \notin M_n^\square$ converges to the local structure around an edge $e = \{u, v\}$ in $T_\infty^\square$ with $e \notin M_\infty^\square$. 
    When all edges whose weights are at least $\ell_\infty^\square(e)$ are removed, the resulting subtrees rooted at $u$ and $v$ must extend to infinite depth. Note that shifting from local neighborhoods of vertices to local neighborhoods of edges involves a change of measure, as detailed in~\cite{MoharramiMoorXu2021Planted}.

    This observation underlies the definition of the minimal spanning forest for infinite graphs, as presented in~\cite{Steele2002Minimal}. Specifically, the minimal spanning forest $M_\infty^\square$ of $T_\infty^\square$ is a spanning forest that excludes an edge $e = (u,v)$ if and only if both $\mathcal{C}(u; T_\infty^\square, e)$ and $\mathcal{C}(v; T_\infty^\square, e)$ are infinite. Here, $\mathcal{C}(u; T_\infty^\square, e)$ denotes the connected component containing $u$ in the subgraph of $T_\infty^\square$ obtained by removing all edges whose length is at least $\ell_\infty^\square(e)$.

    Observe that for any vertex in $T_\infty^\square$, the adjacent edge with the smallest weight belongs to $M_\infty^\square$, ensuring that $M_\infty^\square$ is indeed a spanning forest.  Moreover, it can be shown that none of the connected components of $M_\infty^\square$ are finite, and with probability one, $M_\infty^\square$ is unique.

    Given the above characterization of $M_\infty^\square$, we state our main result on the joint local weak convergence of $\{(\ell^\square_n, G^\square_n,M^\square_n)\}_n$ to  $(\ell^\square_\infty, T^\square_\infty,M^\square_\infty, \root)$. This follows as a special case of the results in~\cite{AldousSteel2001Objective, Steele2002Minimal}. Further details on properties of $M_\infty^\square$ and the joint local weak convergence are provided in Appendix~\ref{app:localweakconv_tree}.

    \begin{theorem}[Joint Local Weak Convergence of Finite Graphs and Their MSTs] \label{thm:mstconv}
    The sequence $\{(\ell^H_n, G^H_n,M_n^H)\}_n$ converges jointly locally weakly to $(\ell^H_\infty, T^H_\infty, M_\infty^H,\root)$, and $\{(\ell^S_n, G^S_n, M_n^S)\}_n$ converges jointly locally weakly to $(\ell^S_\infty, T^S_\infty, M_\infty^S, \root)$.
    \end{theorem}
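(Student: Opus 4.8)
The plan is to obtain the joint statement from the already‑established convergence of the unmarked graphs (Lemma~\ref{thm:localweakconv}) together with the objective‑method principle of \cite{AldousSteel2001Objective,Steele2002Minimal}: local weak convergence of finite weighted graphs to a limit whose minimal spanning forest is almost surely unique forces the minimum spanning trees to converge jointly with the graphs to that minimal spanning forest. Concretely, I would mark each edge of $G^\square_n$ with the indicator that it lies in $M^\square_n$, and each edge of $T^\square_\infty$ with the indicator that it lies in $M^\square_\infty$, and show that the marked objects converge locally weakly; the model‑specific input is then only the identification of the local weak limit in Lemma~\ref{thm:localweakconv} together with an (in both cases structurally identical) check that $M^\square_\infty$ is well behaved, so it suffices to verify the hypotheses of that machinery for $\square\in\{H,S\}$.

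First, since $P$ and every $Q_n$ are continuous, almost surely all $\binom{n}{2}$ edge weights of $G^\square_n$ are distinct, so $M^\square_n$ is the unique minimum spanning tree and obeys the cycle rule: an edge $e_n=\{u_n,v_n\}$ is \emph{excluded} from $M^\square_n$ if and only if $u_n$ and $v_n$ are joined in $G^\square_n$ by a path all of whose edges are strictly lighter than $\ell^\square_n(e_n)$. Because $T^\square_\infty$ is acyclic, deleting all edges of weight $\ge\ell^\square_\infty(e)$ separates the endpoints of $e=\{u,v\}$ into the components $\mathcal{C}(u;T^\square_\infty,e)$ and $\mathcal{C}(v;T^\square_\infty,e)$, and the cycle rule degenerates into exactly the finite‑versus‑infinite dichotomy that defines $M^\square_\infty$ in Step~2. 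Two structural facts about the limit must then be recorded: (i) $M^\square_\infty$ is almost surely unique, because the ``doubly infinite'' events involve no ties (the incident edge weights of $T^\square_\infty$ have a continuous joint law); and (ii) $M^\square_\infty$ has no finite component, since a finite component's minimum‑weight boundary edge $e$ would, after removing all edges of weight $\ge\ell^\square_\infty(e)$, have a finite ``inside'' piece and hence belong to $M^\square_\infty$, contradicting that the component is a maximal connected piece of $M^\square_\infty$. (Here one uses that every vertex of $T^\square_\infty$ has a well‑defined minimum incident weight, as its unplanted children arrive according to a rate‑$1$ Poisson process whose first arrival is almost surely positive, and it has only finitely many planted children.)

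The remaining — and main — step is the continuity of the excluding functional along the convergence. Passing to neighborhoods rooted at a uniformly chosen edge (a change of measure as in \cite{MoharramiMoorXu2021Planted}), one must show that the event ``$e_n\notin M^\square_n$'' converges in distribution to ``$e\notin M^\square_\infty$''. The easy direction is that a bounded ``inside'' piece forces membership: a component that stays finite in the limit corresponds to one that remains bounded in $G^\square_n$ and therefore cannot be reconnected through strictly lighter edges. The hard part is the converse: when \emph{both} limiting components are infinite, the endpoints $u_n,v_n$ are in fact reconnected in $G^\square_n$ through lighter edges with probability tending to $1$. This is exactly where the density of the host graph enters — although only the planted edges carry the fixed law $P$, the unplanted edges remain dense since $\lim_{n\to\infty} nQ_n'(0)=1$, so an infinite invasion‑percolation cluster in $T^\square_\infty$ has an exponentially growing frontier of arbitrarily light unplanted edges, and two such clusters in $G^\square_n$ meet with high probability. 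Making this precise amounts to re‑running the reconnection argument of \cite{AldousSteel2001Objective,Steele2002Minimal} while controlling the (infinite‑degree but weight‑locally‑finite) descendant structure of $T^\square_\infty$, and I expect this to be the technical heart of the argument; once it is in place, the joint local weak convergence, together with the uniqueness and no‑finite‑component properties of $M^\square_\infty$, follows for both the uniform Hamiltonian path and the uniform spanning tree models, since the only model‑dependent ingredient is Lemma~\ref{thm:localweakconv}.
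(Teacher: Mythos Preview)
Your approach is essentially the same as the paper's: invoke Lemma~\ref{thm:localweakconv} for the local weak convergence of the unmarked graphs, then apply the MST convergence theorem of \cite{AldousSteel2001Objective,Steele2002Minimal} to upgrade this to joint convergence with the minimum spanning trees/forest. The paper does exactly this, stating the result as a direct special case and, in Appendix~\ref{app:localweakconv_tree}, recording the general MST Convergence Theorem whose only hypotheses are that the limit has infinitely many vertices and almost surely distinct edge lengths --- both immediate here since $P$ is continuous and the unplanted weights are Poisson arrivals.

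The one place your write-up diverges is in tone rather than substance: you flag the reconnection step (two infinite clusters in the limit $\Rightarrow$ $u_n,v_n$ reconnected through lighter edges in $G^\square_n$) as the ``technical heart'' that must be re-run for the planted model. The paper does not re-open that argument; it treats the MST Convergence Theorem as a black box that already covers any locally finite limit with distinct edge lengths, so no model-specific reconnection analysis is needed. Your intuition about why the reconnection works (dense unplanted edges from $nQ_n'(0)\to 1$) is correct and matches the mechanism inside the cited proofs, but it is already absorbed into the general theorem and need not be reproved.
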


    \subsubsection*{Step 3: Detection Probability}

    Given Theorem \ref{thm:mstconv}, we characterize the asymptotic overlap by calculating the probability that a planted edge is included in the minimum spanning forest:
    \begin{align*}
        \lim_{n \to \infty} \mathbb{E}[\text{overlap}(M^\square_n, M^{\square,*}_n)] 
        &= \lim_{n \to \infty} \mathbb{P}(\text{a planted edge $e$} \in M^\square_n) \\
        &= \mathbb{P}(\text{a planted edge $e$} \in M^\square_\infty),
    \end{align*}
    where the last equality follows directly from Theorem \ref{thm:mstconv}. By definition, for a planted edge $e = (u, v)$ to belong to $M^\square_\infty$, at least one of $\mathcal{C}(u; T_\infty^\square, e)$ or $\mathcal{C}(v; T_\infty^\square, e)$ must be finite. In other words, once we remove all edges whose weights greater than or equal to $\ell_\infty^\square(e)$, at least one of the two branching processes originating from $u$ or $v$ must eventually go extinct.    
    
    To compute the extinction probabilities for the two resulting branching processes, we analyze $T_\infty^\square$ from the perspectives of $u$ and $v$, treating it as a doubly rooted tree. Figure~\ref{fig:plantedH-doublyroot} illustrates $T_\infty^H$ from the perspective of a planted edge, emphasizing its symmetric structure. In contrast, Figure~\ref{fig:plantedS-doublyroot} depicts $T_\infty^S$ from the same perspective, showing a planted subgraph that is bounded on one side (the right side in the figure) with probability $1$ and unbounded on the other side (the left side in the figure). Notably, a vertex with label $\tilde{0}$ appears exclusively on the left side of~\ref{fig:plantedS-doublyroot}, forming an infinite path that begins at $(-\root, -\tilde{0}).$

    \begin{figure*}[h!]
        \centering
        \begin{subfigure}[b]{0.48\textwidth}
            \centering
            \includegraphics[width=\textwidth]{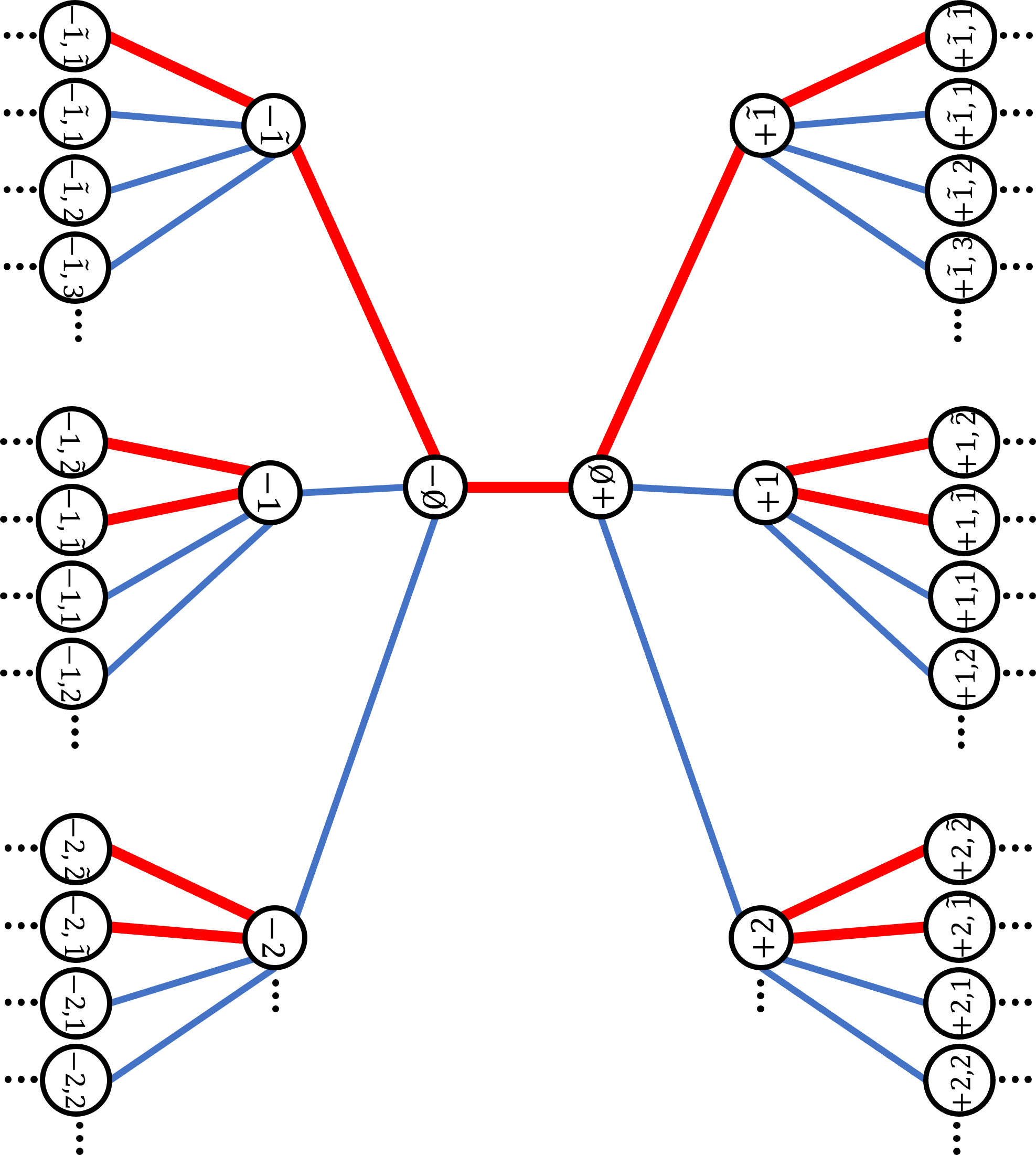}
            \caption{Doubly-rooted version of $T_\infty^H$}
            \label{fig:plantedH-doublyroot}
        \end{subfigure}
        \hfill
        \begin{subfigure}[b]{0.48\textwidth}
            \centering
            \includegraphics[width=\textwidth]{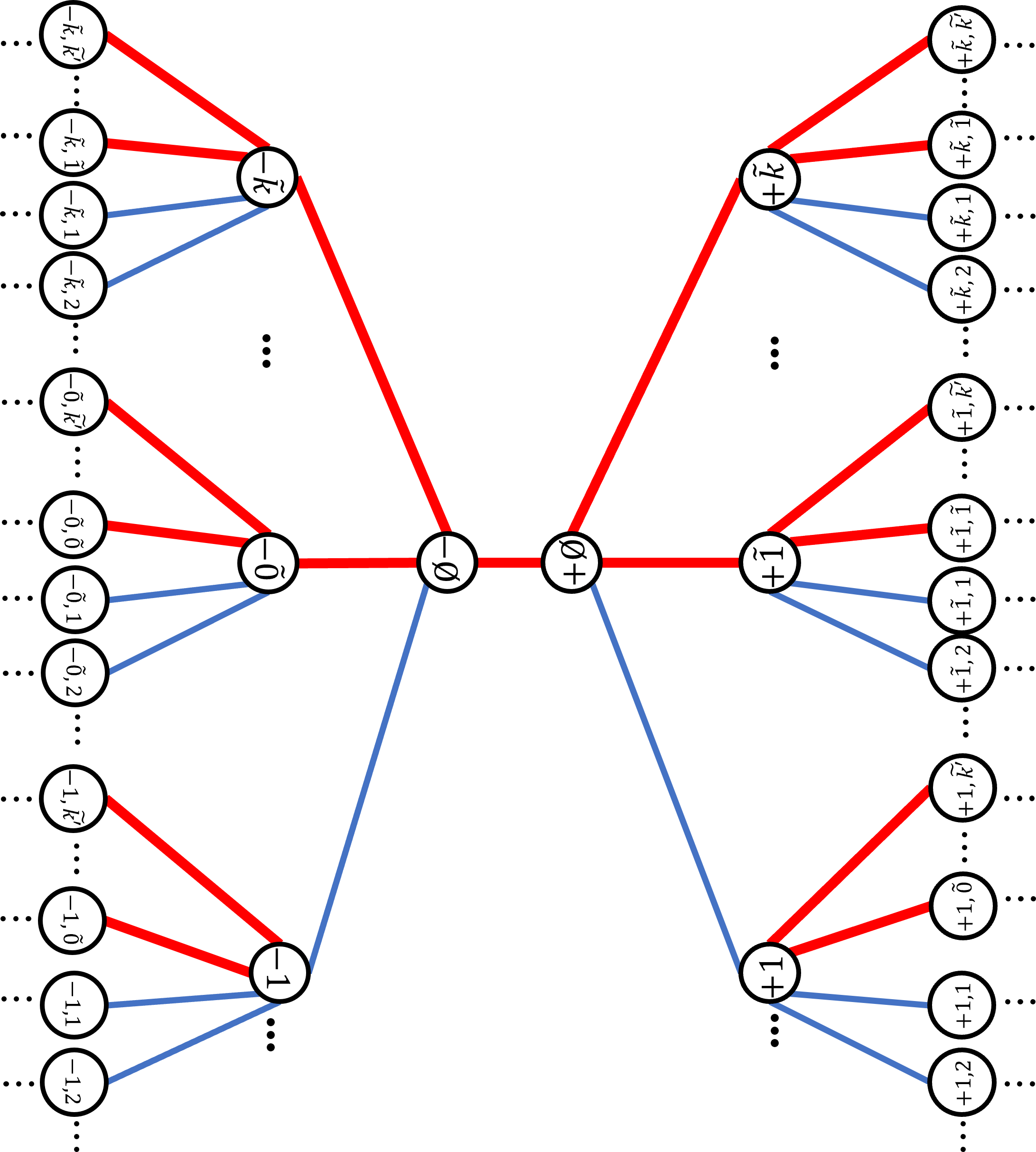}
            \caption{Doubly-rooted version of $T_\infty^S$}
            \label{fig:plantedS-doublyroot}
        \end{subfigure}
        \caption{The ``+'' and ``-'' signs are used to distinguish the labels on different sides.}
        \label{fig:planted-doublyrooted}
    \end{figure*}
    
    It is worth noting that the descendant distributions for the root vertices in both figures are essentially the same, except for the presence of a vertex labeled $\tilde{0}$ in Figure~\ref{fig:plantedS-doublyroot}. More specifically, in Figure~\ref{fig:plantedH-doublyroot}, both root vertices $-\root$ and $+\root$ have exactly one planted descendant. In Figure~\ref{fig:plantedS-doublyroot} however, the vertex $-\root$ has $\tilde{X}_1 + 1$ planted descendants, while $+\root$ has $\tilde{X}_2$ planted descendants, where $\tilde{X}_1$ and $\tilde{X}_2$ have sized-biased distributions. Notice that the size-biased distribution of a Poisson distribution remains a Poisson distribution. In both cases, each root vertex also has infinitely many unplanted descendants, whose edge weights are determined by the arrival times of a rate‐1 Poisson process.

    Suppose that $\ell_\infty(-\root, +\root) = s$, and let $p^\square_+(s)$ denote the extinction probability of the branching process on the ``+'' side, and $p^\square_-(s)$ denote the extinction probability of the branching process on the ``-'' side, after removing all edges with edge weights greater than or equal to $s$. Notice that $p^H_+(s) = p^H_-(s)$, while $p^S_+(s) \neq p^S_-(s)$. 
    
    Let us first look at the uniform Hamiltonian path model, illustrated in Figure~\ref{fig:plantedH-doublyroot}. Consider the branching process emanating from the ``-'' side after removing all edges with weights greater than or equal to $s$. For this process to become extinct, the subtree rooted at the planted edge $(-\root, -\tilde{1})$ and all the subtrees rooted at $-1, -2, \ldots, -r$ must be extinct, where $r$ has Poisson distribution with parameter $s$. Denote by $p(s)$ the extinction probability of the subtree rooted at $\root$, and by $q(s)$ that of the subtree rooted at $-1$. Using the moment-generating function of the Poisson distribution, we derive the following equations: \begin{align*}
    p(s) & = \mathbb{E}_{r \sim \text{Pois}(s)}\left[ 
    \big(q(s)\big)^r \right] \times (1-F(s)+F(s)p(s)) \\
    & =
    \exp(-s(1-q(s)))(1-F(s)+F(s)p(s)).
    \end{align*} 
    Similarly, considering the subtree rooted at $-1$, we obtain
    $
    q(s)=\exp(-s(1-q(s)))(1-F(s)+F(s)p(s))^2
    .
    $
    Thus, we have obtained the fixed-point equations~\eqref{eq:fixed_point_HP}.

    Next, let us look at the uniform spanning tree model, illustrated in Figure~\ref{fig:plantedS-doublyroot}. Consider the branching process on the ``-'' side after removing all edges with edge weights greater than or equal to $s$. For this process to become extinct, all the following subtrees must be extinct: $(1)$ subtrees rooted at $-1, -2, \ldots, -r$ where $r$ has Poisson distribution with parameter $s$, $(2)$ all the subtrees rooted at planted edges $(-\root, -\tilde{1}), (-\root, -\tilde{2}), \ldots, ( -\root, -\tilde{k})$  where $k$ has Poisson distribution with parameter $1$, and $(3)$ the subtree rooted at the planted edge $(-\root,-\tilde{0})$. Denote the extinction probability for the tree rooted at $-\root$ and $-\tilde{1}$   as $p_U(s)$ and $p_B(s)$, respectively. Then we arrive at 
    \begin{align*}
    p_U(s) & = 
    \mathbb{E}_{r \sim \text{Pois}(s)}\left[ 
    \big(p_U(s)\big)^r \right] 
    \times 
    \mathbb{E}_{k \sim \text{Pois}(1)}\left[ 
    \big(p_B(s)\big)^k \right]
    \times
    (1-F(s)+F(s)p_U(s))  \\
    & = \exp(-s(1-p_U(s)) \exp(- (1-p_B(s)) (1-F(s)+F(s)p_U(s)).
    \end{align*}
    Similarly, considering the branching process on the ``+'' side, we obtain 
    $
    p_B(s)=\exp(-s(1-p_U(s))\exp(-(1-p_B(s))).
    $
    Hence, we have obtained the fixed-point equations~\eqref{eq:fixed_point_US}.

    In either model, the extinction probability can be obtained by iteratively applying the corresponding fixed-point equations—\eqref{eq:fixed_point_US} for the uniform spanning tree model or \eqref{eq:fixed_point_HP} for the uniform Hamiltonian path model—starting from the all-zero function. After $t$ iterations, the resulting value represents the probability that the corresponding branching process dies out by generation~$t$, meaning that no vertices survive beyond $t$ edges from the root. Since the probability of dying out by generation~$t$ is non-decreasing in $t$, it follows that as $t \to \infty$, the iteration converges to the smallest fixed point of the equations
    \eqref{eq:fixed_point_US} or \eqref{eq:fixed_point_HP}.

    Given $\ell_\infty(-\root, +\root) = s$, the probability of the event $\{-\root, +\root\} \notin M^\square_\infty$ equals $(1 - p_{-}(s))(1 - p_{+}(s))$. Therefore, the probability that a planted edge $e$ belongs to $M^\square_\infty$ is given by:
    \begin{align*}
        \mathbb{P}(\text{a planted edge } e \in M^\square_\infty) = \int_0^\infty \big(1 - (1 - p_{-}(s))(1 - p_{+}(s))\big) \, dF(s).
    \end{align*}

To determine the asymptotic expected weight of $M_n^\square$, observe that 
    \begin{align*}
        \mathbb{E} \left[ \frac{1}{n-1} \sum_{e\in E_n^\square} \ell_n(e) \, \mathbf{1}_{\{e \in M_n^\square\}} \right] &= 
        \mathbb{E} \left[ \frac{1}{2(n-1)} \sum_{ v \in V_n^\square} \sum_{u  \in V_n^\square: u \neq v} \ell^\square_n(u,v) \, \mathbf{1}_{\{(u,v) \in M_n^\square \}} \right] \\
        &= \mathbb{E} \left[ \frac{1}{2} \sum_{v \in V_n^\square} \ell^\square_n(u_1,v) \, \mathbf{1}_{\{(u_1,v) \in M_n^\square \}} \right]
    \end{align*}
    where $u_1 $ is any fixed vertex of $V_n^\square$ that is not equal to $v$. By Theorem~\ref{thm:mstconv}, we have
    \begin{align*}
        \sum_{v \in V_n^\square} \ell^\square_n(u_1,v) \, \mathbf{1}_{\{(u_1,v) \in M_n^\square \}} 
        \xrightarrow{d} 
        \sum_{ v: (\root,v) \in E_\infty^\square }  
        \ell^\square_\infty(\root,v) \, \mathbf{1}_{\{(\root,v) \in M_\infty^\square \}}.
    \end{align*}
    Under the unplanted models, the left-hand side is shown to be uniformly integrable~\cite{AldousSteel2001Objective}. 
    Moreover, it is straightforward to verify that the portion of the summation involving edges in $M_n^\square \cap M_n^{\square,*}$ is also uniformly integrable.
    Hence, the left-hand side is also uniformly integrable under the planted model. It follows that 
    \begin{align*}
        \lim_{n\to\infty} \mathbb{E} \left[ \frac{1}{2} \sum_{v \in V_n^\square} \ell^\square_n(u_1,v) \, 
        \mathbf{1}_{\{(u_1,v) \in M_n^\square \}}\right] 
        =  \frac{1}{2} \mathbb{E} \left[  \sum_{v: (\root,v) \in E_\infty^\square} \ell^\square_\infty(\root,v) \, 
        \mathbf{1}_{\{(\root,v) \in M_\infty^\square \}} \right].
    \end{align*}
    Therefore, to derive the limits in Theorem~\ref{thm:asymcost}, we must compute the right-hand side of the above equation. Observe that this expression can be written as
    \begin{align*}
        \frac{1}{2}\mathbb{E} \left[  \sum_{v \in V_\infty^\square} \ell^\square_\infty(\root,v) \, 
        \mathbf{1}_{\{(\root,v) \in M_\infty^\square \cap M_\infty^{\square,*} \}} \right]
        + \frac{1}{2}\mathbb{E} \left[  \sum_{v \in V_\infty^\square} \ell^\square_\infty(\root,v) \, 
        \mathbf{1}_{\{(\root,v) \in M_\infty^\square \setminus M_\infty^{\square,*} \}} \right].
    \end{align*}
    Condition on a planted edge $e$ with $\root \in e$ and $\ell_\infty(e) = s$. In that case, the resulting graph has the same structure depicted in Figure~\ref{fig:planted-doublyrooted}, viewed from $e$ as a doubly rooted tree. As noted earlier, the probability of the event $\{-\root, +\root\} \notin M^\square_\infty$ equals $(1 - p_{-}(s))(1 - p_{+}(s))$. Hence, noting that the expected number planted children of $\root$ in both models is $2$, we have
    \begin{align*}
        \mathbb{E} \left[  \sum_{v \in V_\infty^\square} \ell^\square_\infty(\root,v) \, 
        \mathbf{1}_{\{(\root,v) \in M_\infty^\square \cap M_\infty^{\square,*} \}} \right] = 2\int_0^\infty s \big(1 - (1 - p_{-}(s))(1 - p_{+}(s))\big) \, dF(s).
    \end{align*}
   On the other hand, if we condition on an unplanted edge $e$ with $\root \in e$ and $\ell_\infty(e) = s$, then the resulting graph has a symmetric structure when viewed from $e$ as a doubly rooted graph: the subtree on each side is a copy of the original rooted tree, as depicted in Figures~\ref{fig:plantedH} and~\ref{fig:plantedS}; indeed, if we condition on an arrival at the point $s$ in a Poisson process, the set of remaining points still forms a Poisson process with the same rate. Finally, observe that the ``distribution" of the weight of such an edge $e$ is simply uniform on $\mathbb{R}_+$, i.e., it follows the Lebesgue measure on the positive reals. Following the same argument for the probability of the event $\{-\root, +\root\}\notin M_\infty^\square$, we have
   \begin{align*}
       &\mathbb{E} \left[  \sum_{v \in V_\infty^S} \ell^S_\infty(\root,v) \, 
        \mathbf{1}_{\{(\root,v) \in M_\infty^S \setminus M_\infty^{S,*} \}} \right] = \int_0^\infty s \big(1 - (1 - p_{U}(s))^2\big) \, ds,\\
        &\mathbb{E} \left[  \sum_{v \in V_\infty^H} \ell^H_\infty(\root,v) \, 
        \mathbf{1}_{\{(\root,v) \in M_\infty^H \setminus M_\infty^{H,*} \}} \right] = \int_0^\infty s \big(1 - (1 - q(s))^2\big) \, ds.
   \end{align*}
    
\section{Conclusions and Open Problems} 
In this paper, we provide an exact characterization of the asymptotic overlap between the minimum-weight spanning tree and the planted spanning tree. Furthermore, we derive the asymptotic value of the mean weight of MST, which generalizes the famous Frieze's $\zeta(3)$ result to the planted model. Finally, we design an efficient test based on the MST weight to distinguish the planted model from the unplanted model with vanishing testing error. Our analysis extends local weak convergence theory to describe the asymptotic local structure of the planted model. A promising direction for future work is to determine the information-theoretically optimal overlap. Another interesting avenue is to explore other spanning structures, such as spanning regular graphs as studied in~\cite{sicuro2021planted} and~\cite{gaudio2025all}. Finally, we note that  our results on the planted Hamiltonian path apply directly to the planted Hamiltonian cycle~\citep{bagaria2020hidden}, as both paths and cycles share the same weak limit.

\section*{Acknowledgment}
J.~Xu is supported in part by an NSF CAREER award CCF-2144593.

\bibliographystyle{plainnat}
\bibliography{bibliography}
\newpage
\appendix
\section{Data Points for the Plots in Figure~\ref{fig:planted-simuvsth}}\label{app:table}
\begin{table}[h!]
\centering
\begin{tabular}{r|cc|cc}
\toprule
\multicolumn{1}{c|}{$\mu$} & \multicolumn{2}{c|}{Uniform Spanning Tree} & \multicolumn{2}{c}{Uniform Hamiltonian Path} \\
\cmidrule(r){2-3} \cmidrule(l){4-5}
 & Overlap & Mean Weight & Overlap & Mean Weight \\
\midrule
  0.089667 & 0.913074 & 0.076304 & 0.906252 & 0.072267 \\
  0.311334 & 0.789809 & 0.213841 & 0.788253 & 0.200390 \\
  0.402602 & 0.752422	& 0.258459 & 0.753049 & 0.242444 \\
  4.418627 & 0.282288 & 0.852414 & 0.289205 & 0.840914 \\
  8.434651 & 0.178754 & 0.982680 & 0.182270 & 0.977254 \\
 12.852277 & 0.127802 & 1.045965 & 0.129785 & 1.043005 \\
 17.269904 & 0.099549 & 1.080776 & 0.100813 & 1.078922 \\
 21.285928 & 0.082916 & 1.101167 & 0.083818 & 1.099858 \\
 25.703554 & 0.070054 & 1.116881 & 0.070712 & 1.115934 \\
 30.121181 & 0.060652 & 1.128336 & 0.061152 & 1.127620 \\
 34.137205 & 0.054059 & 1.136353 & 0.054460 & 1.135781 \\
 38.554831 & 0.048287 & 1.143361 & 0.048610 & 1.142902 \\
 42.972458 & 0.043629 & 1.149008 & 0.043895 & 1.148632 \\
 46.988482 & 0.040112 & 1.153268 & 0.040338 & 1.152948 \\
 51.406108 & 0.036845 & 1.157220 & 0.037037 & 1.156950 \\
 55.823735 & 0.034071 & 1.160575 & 0.034235 & 1.160343 \\
 59.839759 & 0.031888 & 1.163212 & 0.032033 & 1.163009 \\
\bottomrule
\end{tabular}
\caption{Data points of uniform spanning tree vs uniform Hamiltonian path}\label{table:datapoints}
\end{table}

\section{Characterization of the Fixed Point Equations}\label{app:fixedpoint}
Notice that for both fixed-point equations \eqref{eq:fixed_point_HP} and \eqref{eq:fixed_point_US}, if either function equals $1$ at $s$, then the other function must also be $1$.
    \begin{enumerate}[label = (\arabic*)]
        \item \textbf{Uniform spanning tree}: For convenience, we restate the fixed-point equations~\eqref{eq:fixed_point_US} below:
        \begin{align*}
            p_B(s) &= \exp\big(-s(1 - p_U(s))\big) \exp\big(-(1 - p_B(s))\big),\\
            p_U(s) &= \exp\big(-s(1 - p_U(s))\big) \exp\big(-(1 - p_B(s))\big) \big(1 - F(s) + F(s)p_U(s)\big).
        \end{align*}
        Note that 
        $$
        \frac{ p_U(s)}{p_B(s)} = 1- F(s) +F(s) p_U(s), 
        $$
        and thus $p_U(s) = \frac{p_B(s) (1-F(s)) }{1-F(s) p_B(s) }$. Plugging it back to the expression of $p_B(s)$ yields 
        $$
        p_B(s) 
        = \exp\left(-s  \frac{1-p_B(s) }{1-F(s) p_B(s)} \right) \exp\left(-(1 - p_B(s))\right).
        $$
        For $s \ge 0$, define the function $\phi_s:[0,1]\to[0,1]$ as follows:
        $$
        \phi_s(x) = \exp\left(- (1-x) \left( \frac{s}{1-F(s) x} +1 \right)\right).
        $$
        Then $\phi_s(0)= \exp(-(1+s))>0$ and $\phi_s(1) = 1$. Moreover, 
        $$
        \phi_s'(x)=\phi_s(x) \left( 1+ \frac{s (1-F(s)) }{(1-F(s) x)^2} \right)  >0
        $$
        and 
        $$
        \phi_s''(x) = \phi_s'(x) \left( 1+ \frac{s (1-F(s)) }{(1-F(s) x)^2} \right) + 2 \phi_s(x)  \frac{s \left(1-F(s) \right) F(s) }{(1-F(s) x)^3} >0.
        $$
        Therefore, $\phi_s(x)$ is strictly increasing and convex in $[0,1]$. In addition, $\phi_s'(0)=\exp(-1-s)(1+s(1-F(s))) <1$, and $\phi_s'(1)=1+ \frac{s}{1-F(s)} \ge 1$, where the inequality is strict when $s>0.$ It follows that $\phi_s(x)$ has a unique fixed point $x^*(s)$  in $(0,1)$ for $s>0$ and $\phi_0(x)$
        a unique fixed point $x^*(0)=1$. Therefore, $p_B(s)=x^*(s)$.
        \item \textbf{Uniform Hamiltonian path}: For convenience, we restate the fixed-point equations~\eqref{eq:fixed_point_HP} below:
        \begin{align*}
            p(s) &= \exp\big(-s(1 - q(s))\big) \big(1 - F(s) + F(s)p(s)\big),\\
            q(s) &= \exp\big(-s(1 - q(s))\big) \big(1 - F(s) + F(s)p(s)\big)^2. 
        \end{align*}
        Note that 
        $$
        q(s) = p(s) \big(1 - F(s) + F(s)p(s)\big).
        $$
        Plugging the above back to the expression of $p(s)$ yields 
        $$
        p(s) = \exp\left(-s \left( 1 -p(s) \left(1 - F(s) + F(s)p(s)\right)  \right) \right) \left(1 - F(s) + F(s)p(s)\right).
        $$
        For $s \ge 0$, define the function $\phi_s:[0,1]\to[0,1]$ as below:
        $$
        \phi_s(x) = \exp\left(-s \left( 1 -x \left(1 - F(s) + F(s) x\right)  \right) \right) \left(1 - F(s) + F(s)x \right).
        $$     
        It follows that $\phi_s(0)=e^{-s}(1-F(s)) \ge  0$ with the inequality to be strict for $F(s)<1$ and $\phi_s(1)=1$. Moreover,
        $$
        \phi_s'(x) = s(1+F(s)) \phi_s(x) + 
        \exp\left(-s \left( 1 -x \left(1 - F(s) + F(s) x\right)  \right) \right) F(s) \ge 0
        $$
        and 
        $$
        \phi_s''(x) = s(1+F(s)) \phi_s'(x)
        + \exp\left(-s \left( 1 -x \left(1 - F(s) + F(s) x\right)  \right) \right) F(s) s (1+F(s)) \ge 0 
        $$
        The above inequalities are strict for $s>0.$
        Thus, for $s>0$, $\phi_s(x)$ is strictly increasing and convex in $[0,1]$; for $s=0$, $\phi_0(x)=1-F(0)+F(0)x$. 
        Note that $\phi'_s(1)=s(1+F(s))+F(s).$
        Therefore, $\phi_s(x)$ has a unique fixed point $x^*(s)$ in $(0,1)$ if and only if 
        if $s>\frac{1-F(s)}{1+F(s)}$.     
        Hence, $p(s)=x^*(s)$ if $s>\frac{1-F(s)}{1+F(s)}$
        and $p(s)=1$ otherwise.
    \end{enumerate}

    \section{Proof of Theorem~\ref{thm:hypothesis_testing}}\label{sec:proof_hypothesis}
We prove that $w(M_n)$ concentrates on its mean via Talagrand's inequality. Observe that
$$
w(M_n) = f(W) \triangleq \frac{1}{n-1}   \cdot \min_{M \in \mathcal{M}} \langle W, M \rangle,
$$
where $W\in \mathbb{R}^{\binom{n}{2}}$ is the edge weight vector, $M \in \{0,1\}^{\binom{n}{2}}$ denotes the indicator vector of a spanning tree, and $\mathcal{M}$ denotes the set of all possible indicator vectors of spanning trees.

By definition, $f(W)$ is convex and Lipschitz with Lipschitz constant upper bounded by $1/\sqrt{n-1}$. Moreover, $W$ has independent entries. However, $W_e$ may not be bounded. Thus, to apply Talagrand's inequality, we truncate $W.$ Specifically, given a threshold $\tau_n=16\log n$, 
let $W'_e= \min\{W_e, \tau_n\}$ for every edge $e$. Moreover, define a truncated graph $G_n'$ that consists of edges $e$ for which $W_e \le \tau_n$. By definition, $f(W) \ge f(W')$
and the equality holds when
$G_n'$ is connected, 

Recall that we have assumed the probability density of $Q_n$  at the origin satisfies $\lim_{n\to\infty} nQ'_n(0)=1.$ More specifically, suppose that $Q_n$ has density $\frac{1}{n}\rho(x/n)$, where $\rho$ is a fixed probability density function on $\mathbb{R}$
and $\rho$ is continuous at $0$ with $\rho(0)=1.$ Then, 
$$
Q_n\left(W_e \le \tau_n \right) =\int_0^{\tau_n}
\frac{1}{n}\rho(x/n) dx 
\ge (1-o(1)) \frac{\tau_n}{n} \rho(0),
$$
where the last inequality holds for sufficiently large $n$ by the continuity of $\rho(x)$ at $0$ and $\tau_n=o(n)$. Furthermore, since $\tau_n \to \infty,$
$P(W_e \le \tau_n)=1-o(1)$. It is well known that
 Erd\H{o}s--R\'enyi random graph $\mathcal{G}(n,4\log n/n)$ 
is connected with probability at least $1-O(n^{-3})$. Therefore, by coupling, $G_n'$ is also connected with probability at least $1-O(n^{-3})$. It follows that 
\begin{align}
\prob\{f(W)=f(W')\} \ge 1- O(n^{-3}). \label{eq:f_equal}
\end{align}
Applying Talagrand’s concentration inequality for Lipschitz convex functions (see, e.g.~\cite[Theorem 2.1.13]{tao2012topics}) yields that 
\begin{align}
\prob\left\{ \left|f(W') - \expect{f(W')} \right| \ge \frac{t\tau_n}{\sqrt{n-1}} \right\}
\le C \exp(-c t^2) \label{eq:talagrand}
\end{align}
for some absolute constants $C,c>0$. 
Moreover, by the monotone convergence theorem,
$$
\lim_{n\to \infty} \expect_{\mathcal{H}_0}{f(W')}=\lim_{n\to \infty } \expect_{\mathcal{H}_0}{f(W)} = \zeta(3).
$$
Therefore, for any given $\epsilon>0$, $\expect_{\mathcal{H}_0}{f(W')} \ge \zeta(3)-\epsilon/2$ for all sufficiently large $n.$ It follows  that 
\begin{align*}
\prob_{\mathcal{H}_0} 
\left\{ f(W) \le \zeta(3) -\epsilon \right\}
& \le \prob_{\mathcal{H}_0} 
\left\{ f(W') \le \zeta(3) -\epsilon \right\} + \prob_{\mathcal{H}_0}\left\{ f(W) \neq f(W')\right\} \\
& \le  \prob_{\mathcal{H}_0} 
\left\{ f(W') \le \expect_{\mathcal{H}_0}{f(W')} -\epsilon/2 \right\} + O(n^{-3})\\
& \le \exp(-\Omega(\epsilon^2 n/\log^2(n))) +O(n^{-3}),
\end{align*}
where the last two inequalities follow from~\eqref{eq:f_equal} and~\eqref{eq:talagrand}, respectively.
Similarly, by the assumption
$\lim_{n\to\infty} \expect_{\mathcal{H}_1}{f(W)}\le \zeta(3)-2\epsilon$,
we have $\expect_{\mathcal{H}_1}{f(W)}\le \zeta(3)-3\epsilon/2$
for all sufficiently large $n$ and therefore,  
\begin{align*}
\prob_{\mathcal{H}_1} 
\left\{ f(W) \ge \zeta(3) -\epsilon \right\}
& \le \prob_{\mathcal{H}_1} 
\left\{ f(W') \ge \zeta(3) -\epsilon \right\} + \prob_{\mathcal{H}_1}\left\{ f(W) \neq f(W')\right\} \\
& \le  \prob_{\mathcal{H}_1} 
\left\{ f(W') \ge \expect_{\mathcal{H}_1}{f(W')} +\zeta(3)- \expect_{\mathcal{H}_1}{f(W)} -\epsilon \right\} + O(n^{-3})\\
& \le \exp(-\Omega(\epsilon^2 n/\log^2(n))) + O(n^{-3}),
\end{align*}
where we used the fact that $\expect_{\mathcal{H}_1}{f(W')}\le \expect_{\mathcal{H}_1}{f(W)}$. 
Combining the last two displayed equations yields that
$$
\prob_{\mathcal{H}_0} 
\left\{ f(W) \le \zeta(3) -\epsilon \right\}+ \prob_{\mathcal{H}_1} 
\left\{ f(W) \ge \zeta(3) -\epsilon \right\} \le O(n^{-3}),
$$
concluding the proof.

\section{Local Weak Convergence Framework}\label{app:localweakframework}
In this section, we formally define the framework of local weak convergence. We outline only the key definitions necessary for our derivations; for a more comprehensive and technical exposition, we refer the reader to~\cite{MoharramiMoorXu2021Planted,Van2024Random}.

A \emph{weighted graph} $(\ell, G)$ consists of a graph $G = (V, E)$ with vertex set $V$ and edge set $E$, together with a weight function $\ell: E \to \mathbb{R}_{\geq 0}$ that assigns nonnegative weights to each edge. The weight function $\ell$ induces a metric $d_{\ell}$ on $V$, where the distance between two vertices is the infimum of the sums of edge weights over all paths connecting them.

A \emph{rooted weighted graph} $(\ell, G,\root)$, is a weighted graph $(\ell, G)$ together with a specific vertex $\root \in V$ designated as the root. All rooted weighted graph considered here are \emph{locally finite}, implying that for any $\rho > 0$, the number of vertices in the $\rho$-neighborhood of the root (i.e., the set of vertices at distance at most $\rho$ from $\root$) is finite. 

To compare rooted weighted graphs, we consider their \emph{isomorphism classes}, which identify graphs that are topologically equivalent and preserve edge weights. The set of all such isomorphism classes of rooted weighted graphs is denoted by $\mathcal{G}_\circ$. This space is equipped with a metric $d_\circ$ that captures the idea of \emph{closeness} between its members. Concretely, for two rooted weighted graphs $(\ell, G)$ and $(\ell', G')$, we say their distance is at most $1/(R+1)$ if there exists an isomorphism between the $R$-neighborhoods of their respective root vertices that preserves edges and ensures corresponding edge weights differ by at most $1/R$. 

The space $\mathcal{G}_\circ$ together with metric $d_\circ$ is a complete, and separable metric space, allowing us to employ the standard theory of weak convergence for probability measures on $\mathcal{G}_\circ$. More precisely, let $\mathcal{P}(\mathcal{G}_\circ)$ denote the set of probability measures on $\mathcal{G}_\circ$, endowed with the topology of weak convergence. We say that a sequence of probability measures $\{\eta_n\}_{n=1}^{\infty}$ in $\mathcal{P}(\mathcal{G}_\circ)$ \emph{converges weakly} to $\eta_\infty \in \mathcal{P}(\mathcal{G}_\circ)$, denoted by $\eta_n \xrightarrow{w} \eta_\infty$, if for every continuous and bounded function $f \colon \mathcal{G}_\circ \to \mathbb{R}$,
\begin{align*}
\int_{\mathcal{G}_\circ} f \, d\eta_n \longrightarrow \int_{\mathcal{G}_\circ} f \, d\eta_\infty     
\end{align*}

This notion of convergence is called \emph{local weak convergence} because the metric $d_\circ$ is defined locally in terms of the neighborhood of the root vertex, thereby capturing the asymptotic local structure of random weighted rooted graphs as viewed from the root. In the case of finite random weighted graphs $(\ell_n, G_n)$ without a designated root, we select a root vertex uniformly at random, and the local weak convergence is then defined with respect to the induced measure.

\begin{definition}[Local Weak Convergence]
    Consider a sequence of random finite weighted graphs $\{(\ell_n, G_n)\}_n$, where $G_n$ has $n$ vertices. Let $\root_n$ denote a randomly selected vertex from $G_n$ as the root vertex. Let $U_n \in \mathcal{P}(\mathcal{G}_\circ)$ denote the probability measure associated with $(\ell_n, G_n,\root_n)$. We say $(\ell_n, G_n)$ converges locally weakly to a random infinite rooted tree $(\ell_\infty, T_\infty, \root)\sim \eta$ if $U_n\xrightarrow{w}\eta$.
\end{definition}

\section{Local Weak Convergence of Finite Graph Models}\label{app:localweakconv}
To establish local weak convergence, we define an \emph{exploration process} that reveals the neighborhood of a fixed vertex $v$. Specifically, starting from $v$, the process uncovers all unseen planted neighbors, along with the $m$ closet new unplanted neighbors at each step. This procedure continues up to \emph{generation} $m$, where the generation of a node is defined by the number of edges between that node and the starting vertex in the explored subgraph. For the random finite weighted graph $(\ell^\square_n, G^\square_n)$, the exploration begins from a uniformly chosen vertex, while for the random infinite rooted tree $(\ell^\square_\infty, T^\square_\infty, \root)$, it starts from the designated root $\root$.

A key step is to show that, for any fixed $m$, with probability tending to $1$, there exists an \emph{edge-preserving isomorphism} between the explored neighborhoods in the finite and infinite models, and that their corresponding edge weights converge in distribution. Local weak convergence then follows by noting that for any fixed radius $\rho>0$, one can choose $m$ sufficiently large so that the $\rho$-neighborhood of the randomly selected vertex is contained within the explored subgraph. We now describe the exploration process in detail.

Consider a vertex $v$, which may be a randomly selected vertex in the finite graph or the designated root in the infinite tree. We begin with $v$ at \emph{generation} 0. At each subsequent generation $t = 1,2,\dots$:
\begin{enumerate}
    \item \textbf{Planted neighbors:} For every vertex $u$ discovered at generation $t-1$, reveal all of its planted neighbors that have not yet been seen.
    \item \textbf{Unplanted neighbors:} For each such vertex $u$, reveal the $m$ closest unplanted neighbors that have not yet been seen.
\end{enumerate}
This process continues until we reach generation $m$. At each stage, only newly encountered vertices are revealed, ensuring that no vertex is revisited in subsequent generations.

To establish an edge-preserving isomorphism between the explored neighborhoods, one must show that short cycles in terms of total edge weights vanish from the neighborhood of a randomly selected vertex in $(\ell^\square_n, G^\square_n)$ with high probability as $n \to \infty$, and that the descendant distribution of the finite graph model converges to that of the infinite rooted tree. Once such an isomorphism is obtained, the remainder of the argument becomes straightforward: the planted edges follow the same distribution~$P$ in both the finite and infinite models, and the distribution of the $m$ nearest unplanted neighbors converges to that of a rate-$1$ Poisson process. The disappearance of short cycles and the convergence of edge weight distributions were rigorously proved in~\cite{MoharramiMoorXu2021Planted} for the planted matching problem, and the same reasoning applies here with minimal modification.

Hence, our goal is to show that for any fixed $m>0$, as $n \to \infty$, the exploration process on the finite graph $(\ell^\square_n, G^\square_n)$ and on the infinite rooted tree $(\ell^\square_\infty, T^\square_\infty, \root)$ yields the same local structure. This is simpler to establish for the Hamiltonian model than for the uniform spanning tree model. Finally, the local weak convergence of the finite graph models follows directly from Lemma~\ref{lem:lwc_Hamiltonian} and Lemma~\ref{lem:lwc_Uniform}. 

\begin{lemma}\label{lem:lwc_Hamiltonian}
    With probability tending to $1$ as $n \to \infty$, there exists an edge-preserving isomorphism between the explored neighborhoods of $(\ell^H_n, G^H_n)$ and the infinite rooted tree $(\ell^H_\infty, T^H_\infty, \root)$.
\end{lemma}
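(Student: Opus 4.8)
The plan is to construct the edge-preserving isomorphism generation by generation through the exploration process, and to show that the finite-model exploration matches the infinite-tree exploration up to generation $m$ with probability tending to $1$. First I would describe the exploration on the finite graph $G^H_n$ precisely: starting from a uniformly chosen vertex $v$, at each generation we expose the (at most two, exactly two for an interior vertex of the path) planted neighbors prescribed by $M_n^{H,*}$ together with the $m$ nearest unplanted neighbors among the remaining vertices. The three things to verify are (i) no collisions occur, i.e.\ during the first $m$ generations the exploration never rediscovers a previously seen vertex (which is exactly the statement that short cycles vanish), (ii) the planted-degree pattern in the finite model matches the infinite tree: each freshly discovered unplanted vertex has two unseen planted neighbors, and each freshly discovered planted vertex has exactly one unseen planted neighbor, and (iii) the edge weights converge jointly in distribution --- planted edge weights are i.i.d.\ $P$ in both models by construction, and the $m$ nearest unplanted weights out of a vertex, each an independent $Q_n$ with $nQ_n'(0)\to 1$, converge to the first $m$ arrival times of a rate-$1$ Poisson process.

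For step (ii), the key structural observation is that the planted object is a single Hamiltonian path $v_1 v_2 \cdots v_n$: every vertex has planted degree $2$ except the two endpoints, which have planted degree $1$. Since a uniformly random vertex is an endpoint with probability $2/n \to 0$, and more generally the probability that the exploration up to the bounded generation $m$ encounters either endpoint is $O(1/n)$, we may condition on the event that every vertex seen in the first $m$ generations is an interior vertex of the path. On that event the planted structure exposed is exactly: the root has two planted neighbors, and each time we move to a planted neighbor we arrive at another interior path vertex whose two planted neighbors are its path-predecessor (already seen, the edge we just traversed) and its path-successor (new) --- giving exactly one new planted child, matching $T^H_\infty$. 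This is where I would invoke the symmetry of the uniform Hamiltonian path: conditioned on the unordered edge set of a uniform Hamiltonian path and on which vertices are interior, the induced labeled structure on any bounded explored neighborhood is distributed precisely as the planted subtree of $T^H_\infty$.

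Step (i), the vanishing of short cycles, is the main obstacle, but it is precisely the ingredient that \cite{MoharramiMoorXu2021Planted} established for the planted matching model, and I would argue it transfers with only cosmetic changes. The two sources of a collision are: two explored vertices sharing an unexpected planted edge, or two explored vertices connected by a short unplanted edge. The first is ruled out on the "all interior vertices" event together with the fact that the explored vertex set has bounded size $O_m(1)$ while the path has length $n$, so the chance that two of them are path-adjacent beyond the tree edges we already track is $O_m(1/n)$. For the second, among $O_m(1)$ explored vertices the number of unplanted pairs is $O_m(1)$, and each such pair has weight below the exploration radius (itself $O_m(1)$ since it is the $m$-th Poisson arrival, which is tight) only with probability $O_m(Q_n'(0))= O_m(1/n)$; a union bound closes it. Having established (i)--(iii), the desired edge-preserving isomorphism is the identity map between the two exploration trees, and the joint convergence of the edge weights upgrades it to convergence in the metric $d_\circ$; since $m$ was arbitrary and any fixed radius $\rho$ is covered by taking $m$ large, local weak convergence follows. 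I would remark that the analogous statement for the uniform spanning tree (Lemma~\ref{lem:lwc_Uniform}) is harder precisely because the planted-degree pattern is random (Poisson-distributed extra children) rather than the deterministic degree-$2$ pattern here, which is why the Hamiltonian case is treated first.
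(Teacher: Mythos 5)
Your proposal is correct and follows essentially the same route as the paper's proof: both argue that (a) with high probability the exploration never encounters the two endpoints of the Hamiltonian path, so every explored vertex has planted degree exactly two, (b) short cycles vanish so the exploration tree matches $T^H_\infty$ structurally, and (c) the edge-weight convergence (planted weights i.i.d.\ $P$, nearest unplanted weights to a rate-$1$ Poisson process) is inherited from the planted-matching argument of \cite{MoharramiMoorXu2021Planted}. Your write-up is more detailed than the paper's, which compresses steps (a) and (b) into a two-item list and defers (c) to the preamble of the appendix, but the content and structure are the same.
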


\begin{proof}
    Let $f(m,n)$ be the number of vertices revealed by the exploration process on $(\ell^H_n, G^H_n)$. Observe that, with high probability, every vertex encountered in this process has exactly two planted neighbors. The reasons are twofold:
    \begin{enumerate}
        \item Except for two vertices in $G^H_n$, every vertex has exactly two planted neighbors. The probability that either of those two vertices appears in the exploration vanishes as $n \to \infty$.  
        \item With high probability, there are no short cycles, ensuring that each time we reveal planted neighbors, we add two new vertices.
    \end{enumerate}
    Consequently, as $n$ grows, $f(m,n)$ depends only on $m$, and the explored structure coincides with that of $(\ell^\square_\infty, T^\square_\infty, \root)$ with high probability. This implies the existence of an edge-preserving isomorphism, completing the proof.
\end{proof}

\begin{lemma}\label{lem:lwc_Uniform}
    With probability tending to $1$ as $n \to \infty$, there exists an edge-preserving isomorphism between the explored neighborhoods of $(\ell^S_n, G^S_n)$ and the infinite rooted tree $(\ell^S_\infty, T^S_\infty, \root)$.
\end{lemma}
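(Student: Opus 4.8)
The plan is to mirror the proof of Lemma~\ref{lem:lwc_Hamiltonian}, but now we must contend with the fact that the planted spanning tree $M_n^{S,*}$ is a uniform spanning tree on $K_n$ rather than a path, so the number of planted neighbors of a vertex is random rather than deterministically two. First I would recall the classical fact that the local weak limit of a uniform spanning tree on $K_n$ (equivalently, the Aldous--Broder or Wilson chain description) is the \emph{skeleton tree}: a unique semi-infinite path from the root to infinity, with an independent Poisson$(1)$ Galton--Watson tree hanging off each vertex of that path~\cite{Aldous1991Asymptotic,Grimmett1980Random,AldousSteel2001Objective}. In particular, when one roots the skeleton tree at a uniformly chosen vertex, the root has one distinguished planted child $\tilde 0$ lying on the spine, plus $k \sim \mathrm{Pois}(1)$ further planted children each generating an independent Poisson$(1)$ branching subtree; this is exactly the planted-edge structure of $(\ell^S_\infty, T^S_\infty,\root)$ described in the construction above. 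So the target for the planted-edge part of the exploration is precisely the degree statistics of the skeleton tree.

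The exploration process I would run is the one defined in this appendix: from the chosen root, at each generation reveal all previously unseen planted neighbors of each active vertex together with its $m$ nearest previously unseen unplanted neighbors, stopping at generation $m$. The proof then decomposes into three claims, each holding with probability $1-o(1)$ as $n\to\infty$ for fixed $m$: (i) the explored subgraph contains no short cycle, so that it is a tree and planted and unplanted neighbor sets never collide, which is inherited verbatim from the cycle-vanishing estimates of~\cite{MoharramiMoorXu2021Planted} since those bounds only use that there are $O(n)$ planted edges and $\binom n2$ edges total; (ii) conditioned on the explored planted subgraph being a tree, the induced planted degree sequence revealed by the exploration converges in distribution to that of the rooted skeleton tree --- i.e.\ the root has $1+\mathrm{Pois}(1)$ planted children, a spine vertex (label ending in $\tilde 0$ or a non-tilded integer) likewise has $1+\mathrm{Pois}(1)$, and every other planted vertex has $\mathrm{Pois}(1)$ planted children, all asymptotically independent across the $O_m(1)$ revealed vertices; and (iii) conditioned on the planted structure, the $m$ nearest unplanted edge weights at each active vertex are, after rescaling by the $\lim_{n\to\infty}nQ_n'(0)=1$ normalization, jointly close in distribution to the first $m$ points of independent rate-$1$ Poisson processes, which is again exactly the argument of~\cite{MoharramiMoorXu2021Planted} (the unplanted edges form a complete graph with i.i.d.\ weights of density $\sim 1/n$ near $0$, so nearest-neighbor distances converge to a Poisson process). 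Combining (i)--(iii) gives an edge-preserving isomorphism between the explored neighborhood of $(\ell^S_n,G^S_n)$ and that of $(\ell^S_\infty,T^S_\infty,\root)$ whose edge weights converge in distribution, which is the assertion of the lemma.

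The main obstacle is claim (ii): establishing that the \emph{local} planted-degree statistics of a uniform spanning tree of $K_n$ really are asymptotically $1+\mathrm{Pois}(1)$ along a spine and $\mathrm{Pois}(1)$ off it, and that these are asymptotically independent across the finitely many vertices revealed in $m$ generations of the exploration. The cleanest route I would take is to invoke Wilson's algorithm (loop-erased random walk) or the Aldous--Broder walk to realize $M_n^{S,*}$: from a uniform root, the walk's trajectory before looping out of the explored ball determines the spine locally, and the $\mathrm{Pois}(1)$ side-branches arise as the limit of the number of vertices that attach to a given spine vertex --- each of the remaining $\approx n$ vertices attaches there with probability $\approx 1/n$ independently, giving the Poisson limit. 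One must be careful that the exploration process here reveals neighbors in order of \emph{edge weight}, which is independent of the tree topology (the weights on planted edges are i.i.d.\ $P$, drawn independently of which edges are planted), so the weight-ordering does not bias the degree counts; this independence is what lets us borrow the topological skeleton-tree limit off the shelf and graft on the Poisson-process unplanted weights from~\cite{MoharramiMoorXu2021Planted}. A secondary technical point is bounding the probability that the exploration reaches one of the atypical vertices or hits a coincidence between two branches before generation $m$; since $f(m,n)$ is $O_m(1)$ with high probability (each generation multiplies the frontier by a $\mathrm{Pois}(1)+1$-type factor plus $m$), a union bound over $O_m(1)$ vertices against an $O(1/n)$ or $O(\log n/n)$ per-vertex failure probability closes this, exactly as in Lemma~\ref{lem:lwc_Hamiltonian}.
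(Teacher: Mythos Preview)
Your decomposition into claims (i)--(iii) is sound, and (i) and (iii) are indeed inherited from~\cite{MoharramiMoorXu2021Planted} as you say. For the crux, claim (ii), you propose invoking Wilson's algorithm or the Aldous--Broder walk to realize the UST and read off the skeleton-tree limit; the paper instead takes Grimmett's combinatorial route via Pr\"ufer sequences. Concretely, the paper conditions on the explored subgraph $\mathcal H_t$ (with $N$ vertices partitioned into $k$ planted components having $r_1,\dots,r_k$ boundary vertices at generation $t$), contracts each planted component to a single delegate vertex, and counts spanning trees on the resulting $(n-N+k)$-vertex complete graph with prescribed delegate degrees using the multinomial Pr\"ufer-code identity (Lemma~\ref{lem:identity}); after redistributing each delegate's $d_i$ descendants among its $r_i$ boundary vertices and dividing by a similarly computed denominator, the exact limit $\prod_i \bigl(\tfrac{d_i}{r_i}\prod_j \tfrac{e^{-1}}{d^{(i)}_j!}\bigr)$ drops out and is matched term-by-term to the descendant law in $T^S_\infty$. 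Your probabilistic route is viable in principle but is not quite off-the-shelf in the way you suggest: the standard skeleton-tree convergence is a single-root statement, whereas the exploration here spawns $k$ separate planted-component explorations from the vertices reached via unplanted edges, and what is needed is the \emph{joint} law of the UST near those $k$ specified vertices, conditional on no planted edge among the already-revealed set. Wilson's algorithm can be made to deliver this (run LERWs sequentially from the $k$ starts and argue non-interaction in the first $m$ generations with probability $1-O(1/n)$), but that argument is not in the references you cite and would have to be written out. The Pr\"ufer-code approach buys this joint conditional law in one self-contained computation at the price of some combinatorial bookkeeping; your approach is more conceptual and arguably more portable to other planted structures, but the multi-root joint convergence is a genuine piece of work you would need to supply.
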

\begin{proof}
\cite{Grimmett1980Random} employed a combinatorial approach to show that the uniform spanning tree converges locally to the skeleton tree. In particular, they demonstrated that, asymptotically, one can construct an edge-preserving isomorphism between the neighborhood of a randomly chosen vertex in the uniform spanning tree and the skeleton tree. We generalize their argument to prove the existence of such an edge-preserving isomorphism for our problem setting. The following lemmas provide essential asymptotic identities.
    \begin{lemma}\label{lem:noneighbor}
        Let $K_n = (V_n, E_n)$ denote a complete graph with $n$ vertices. Fix a set of vertices $\mathcal{V}_k = \{v^{(1)}, v^{(2)},\ldots,v^{(k)}\} \subset V_n$, and let $\mathcal{S}_n(k)$ denote the number of spanning trees that contain at least one edge between the vertices in $V_n$. Then, 
        \begin{align*}
            \lim_{n \to \infty} \frac{S_n(k)}{n^{n-2}} = 0
        \end{align*}
    \end{lemma}
    \begin{proof}[Proof of Lemma \ref{lem:noneighbor}]
        For any fixed vertices $ u, v \in V_n^S $, the number of spanning trees that include the edge $ e = \{u, v\} $ is $ 2n^{n-3} $. Hence,
        \begin{align*}
            S_n(k) \leq \binom{k}{2} \times 2n^{n-3}
        \end{align*}
    \end{proof}
    
    \begin{lemma}\label{lem:identity}
        Let $K_n = (V_n, E_n)$ denote a complete graph with $n$ vertices. Fix a set of vertices $\mathcal{V}_k = \{v^{(1)}, v^{(2)},\ldots,v^{(k)}\} \subset V_n$, and let $\mathcal{D}_n(d_1, d_2, \dots, d_k)$ denote the number of spanning trees in $K_n$ where the degree of vertex $v_i$, for $i \in \{1, 2, \dots, k\}$, is $d_i$, and none of the vertices in $\mathcal{V}_k$ are directly connected to each other. Then,
        \begin{align*}
            \lim_{n \to \infty} \frac{\mathcal{D}_n(d_1, d_2, \dots, d_k)}{n^{n-2}}
            = \frac{\mathrm{e}^{-k}}{(d_1 - 1)! (d_2 - 1)! \ldots (d_k - 1)!}.
        \end{align*}
    \end{lemma}
    
    \begin{proof}[Proof of Lemma \ref{lem:identity}]
        We use the Pr\"ufer code representation of spanning trees, where the number of times a vertex $v$ appears in the code is $\mathrm{degree}(v) - 1$. As a result, the joint degree distribution of $(v^{(1)}, v^{(2)},\ldots,v^{(k)})$ follows a multinomial distribution with parameters $n - 2$ and probabilities $p_1 = p_2 = \dots = p_k = \tfrac{1}{n}$ and $p_{k+1} = \tfrac{n - k}{n}$. More specifically,
        \begin{align*}
            &\prob\big(\mathrm{degree}(v^{(1)}) = d_1,\,
                         \mathrm{degree}(v^{(2)}) = d_2,\,
                         \dots,\,
                         \mathrm{degree}(v^{(k)}) = d_k \big) \\
            &\qquad = \binom{n - 2}{\,d_1 - 1,\,d_2 - 1,\,\dots,\,d_k - 1\,}
                      \biggl(\frac{1}{n}\biggr)^{\sum_{i=1}^k (d_i - 1)}
                      \biggl(\frac{n - k}{n}\biggr)^{\,n - 2 - \sum_{i=1}^k (d_i - 1)} \\
            &\qquad \xrightarrow[n\to\infty]{}
                    \frac{\mathrm{e}^{-k}}{(d_1 - 1)! (d_2 - 1)! \ldots (d_k - 1)!}.
        \end{align*}
        Finally, as $n \to \infty$, the relative number of spanning trees with the given degree sequence in which any pair of vertices among $v_1, \dots, v_k$ is directly connected converges to zero. Thus, the stated limit holds, completing the proof.
    \end{proof}
   
    Let $\mathcal{H}_t = (\mathcal{V}_t,\mathcal{E}_t)$ denote the subgraph explored up to generation $t$ in the exploration process for $(\ell^S_n, G^S_n)$.    
    Suppose $\mathcal{H}_t$ contains a total of $N$ vertices and $k$ planted components (connected components of $\mathcal{H}_t$ after removing the unplanted edges).
    Numbering the planted components arbitrarily, let $r_i$ denote the number of vertices at generation $t$  of planted component $i$. The vertices at generation $t$ of planted component $i$ are denoted by $v^{(i)}_j$ for $j \in \{1, 2, \dots, r_i\}$. Figure~\ref{fig:example} provides an example illustrating these associated numbers.
    
    \begin{figure*}[h!]     
    \centering
    \includegraphics[width=0.8\textwidth]{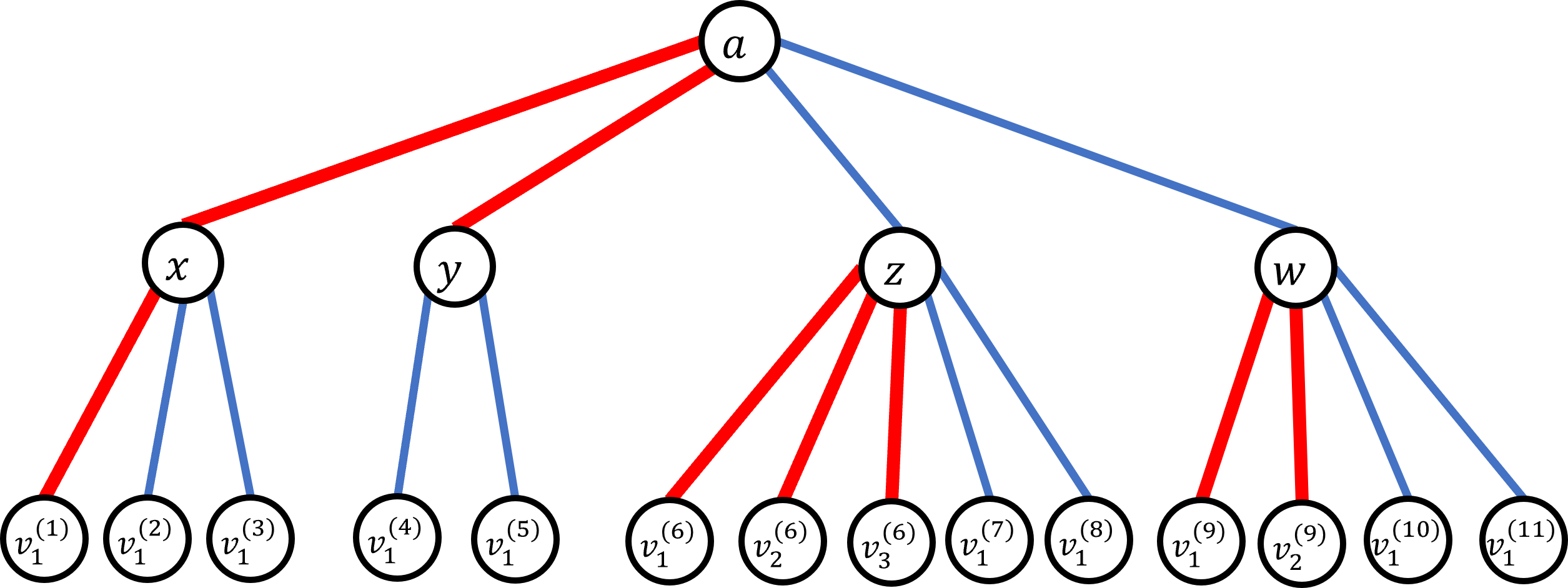}
    \caption{An example of a subgraph $\mathcal{H}_2$ with $m = 2$ is shown. The total number of vertices is $N = 20$, and the number of planted components is $k = 11$. Numbering these components from left to right, we have $r_1= r_2 = r_3 = r_4 = r_5 = 1$, $r_6 = 3$, $r_7 = r_8 = 1$, $r_9 = 2$, and $r_{10} = r_{11} = 1$.
    }
    \label{fig:example}
    \end{figure*}

    Our goal is to characterize the descendant distribution of vertices at generation $t$. For a vertex $v^{(i)}_j$, let $\mathrm{des}\big(v^{(i)}_j\big)$ denote the number of its descendants that appear at generation $t+1$ during the exploration process. Given a collection of nonnegative integers $d^{(i)}_j$ for $i=\{1,\dots,k\}$ and $j=\{1,\dots,r_i\}$, we are interested in 
    \begin{align*}
        \lim_{n \to \infty} \prob \left(\mathrm{des}\big(v^{(i)}_j\big) = d^{(i)}_j \,\text{ for all } i \in \{1,2,\dots,k\}        \text{ and } j \in \{1,2,\dots,r_i\} \mid \mathcal{H}_t\right).
    \end{align*}

    To compute the above conditional probability we use a counting argument. Note that this probability equals
    \begin{align*}
        \frac{\text{(number of spanning trees containing } \mathcal{H}_t \text{ with the given descendant distribution)}}{\text{(number of spanning trees containing } \mathcal{H}_t\text{)}}.
    \end{align*}

    To calculate the numerator, it is equivalent to replacing each planted component $i$ by a single vertex $v^{(i)}$, constructing a new complete graph on these vertices $\big\{v^{(1)}, v^{(2)}, \dots, v^{(k)}\big\}$ plus all remaining vertices in $V_n^S \setminus \mathcal{V}_t$, computing the quantity $\mathcal{D}_{n-N+k-2}(d_1, d_2, \dots, d_k)$ as defined in Lemma \ref{lem:identity}, where $d_i = \sum_{j=1}^{r_i} d^{(i)}_j,$ and then distributing the $d_i$ descendants of $v^{(i)}$ among the vertices $v^{(i)}_1,\dots,v^{(i)}_{r_i}$ according to their prescribed degrees. Consequently,
    \begin{align*}
        \textbf{Numerator} &= \mathcal{D}_{n-N+k}(d_1, d_2, \dots, d_k) \times 
        \prod_{i=1}^k \binom{d_i}{\,d^{(i)}_1,\,d^{(i)}_2,\dots,d^{(i)}_{r_i}} \\
        &\sim e^{-k} (n-N+k)^{n-N+k-2} \prod_{i=1}^k \left( d_i \prod_{j=1}^{r_i} \frac{1}{d^{(i)}_j!} \right)
    \end{align*}
    
    To compute the denominator, we use the Pr\"ufer code representation. Consider a Pr\"ufer code of length $n - N + k - 2$ constructed from a set of labels containing $n - N + \sum_{i=1}^k r_i$ distinct elements. These labels include all vertices in $V_n^S \setminus \mathcal{V}_t$ as well as the vertices $\{\,v_j^{(i)} : i \in \{1,2,\dots,k\},\; j \in \{1,2,\dots,r_i\}\}$. Each such labeling corresponds to a tree in which the labels in $\{v_1^{(i)}, v_2^{(i)}, \dots, v_{r_i}^{(i)}\}$ are replaced by a single delegate vertex $v^{(i)}$ for each $i \in \{1,2,\dots,k\}$.

    Notice that for each $i$, the vertices in $\{v_1^{(i)}, v_2^{(i)}, \dots, v_{r_i}^{(i)}\}$ appear collectively $d_i - 1$ times in the Pr\"ufer code, where $d_i = \sum_{j=1}^{r_i} d^{(i)}_j$ is the degree of the delegate vertex $v^{(i)}$. 
    Thus, the Pr\"ufer code indicates which neighbors of the delegate vertex $v^{(i)}$ are to be connected to the individual members of $\{v_1^{(i)}, v_2^{(i)}, \dots, v_{r_i}^{(i)}\}$, with the exception of one neighbor that does not appear directly in the code. (See Figure~\ref{fig:example2} for a detailed illustration.) There are $r_i$ possible choices for selecting this additional neighbor from the set $\{v_1^{(i)}, v_2^{(i)}, \dots, v_{r_i}^{(i)}\}$.
    
    Finally, as mentioned earlier, the relative frequency with which any two delegate vertices are directly connected is negligible. Consequently,
    \begin{align*}
        \textbf{Denominator} \sim (n - N + \sum_{i=1}^k r_i)^{\,n - N + k - 2} \times \prod_{i=1}^k r_i.
    \end{align*}

    \begin{figure*}[h!]     
    \centering
        \begin{subfigure}[b]{0.25\textwidth}
            \centering
            \includegraphics[width=\textwidth]{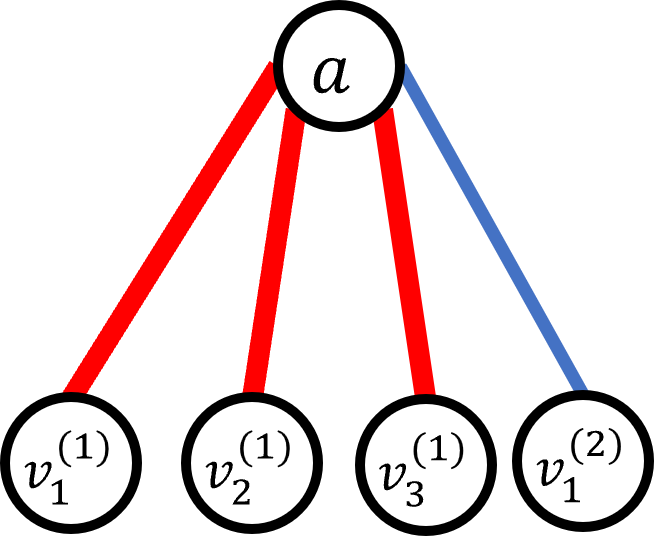}
            \caption{Subgraph $\mathcal{H}_1$}
        \end{subfigure}
        \hfill
        \begin{subfigure}[b]{0.5\textwidth}
            \centering
            \includegraphics[width=\textwidth]{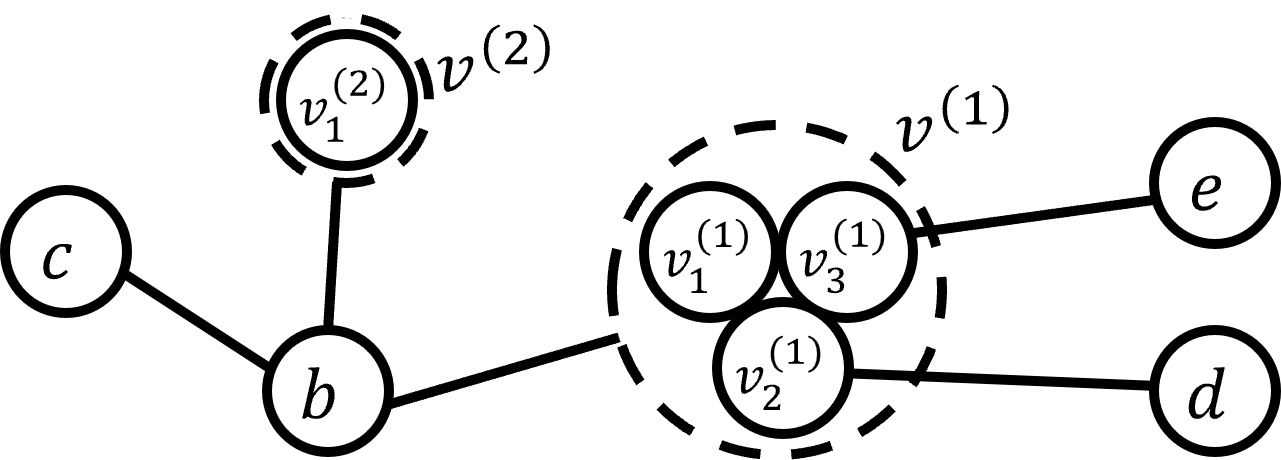}
            \caption{Tree given by the Pr\"ufer code $(b,v_2^{(1)},v_3^{(1)},b)$}
        \end{subfigure}
    \caption{An example of a tree resulting from a Pr\"ufer code of length $n - N + k - 2$ with $n - N + \sum_{i=1}^k r_i$ distinct labels is illustrated here. In this example, we have $n=9$, $N=5$, $t=1$, $k=2$, $r_1=3$, $r_2=1$, and $V_n^S = \{a, b, c, d, e, v_1^{(1)}, v_2^{(1)}, v_3^{(1)}, v_1^{(2)}\}$. The delegate vertices are indicated with a dashed outline. Notice that the neighbors of $b$, aside from $c$, are not precisely specified in the code and are instead treated as delegate vertices.
    }
    \label{fig:example2}
    \end{figure*}
    
    Substituting the approximate numerator and denominator, we have
    \begin{align*}
        &\prob \left(\mathrm{des}\big(v^{(i)}_j\big) = d^{(i)}_j \,\text{ for all } i \in \{1,2,\dots,k\} \text{ and } j \in \{1,2,\dots,r_i\} \mid \mathcal{H}_t\right) \\
        &\qquad\sim e^{-k} \left(1+\frac{k-\sum_{i=1}^k r_i}{n-N+\sum_{i=1}^k r_i} \right)^{n-N+k-2} \prod_{i=1}^k \left( \frac{d_i}{r_i} \prod_{j=1}^{r_i} \frac{1}{d^{(i)}_j!} \right)\\
        &\qquad \to \prod_{i=1}^k \left( \frac{d_i}{r_i} \prod_{j=1}^{r_i} \frac{e^{-1}}{d^{(i)}_j!} \right)
    \end{align*}

    Next, consider the exploration process for $(\ell^S_\infty, T^S_\infty, \root)$ and condition on the subgraph up to generation $t$ to be $\mathcal{H}_t$. Note that $\mathcal{H}_t$ does not specify which vertex, in each connected component at generation $t$, has a label ending in $\tilde{0}$. For each component $i$, let the vertices at generation $t$ be denoted by $\{ v^{(i)}_{1,\infty}, v^{(i)}_{2,\infty}, \dots, v^{(i)}_{r_i,\infty} \}.$ We have
    \begin{align*}
        &\mathbb{P} \left(\mathrm{des}\big(v^{(i)}_{j,\infty}\big) = d^{(i)}_j \,\text{ for all } i \in \{1,2,\dots,k\} \text{ and } j \in \{1,2,\dots,r_i\} \mid \mathcal{H}_t\right) \allowdisplaybreaks\\
        &\qquad = \prod_{i=1}^k \mathbb{P} \left(\mathrm{des}\big(v^{(i)}_{j,\infty}\big) = d^{(i)}_j \,\text{ for all } j \in \{1,2,\dots,r_i\} \mid \mathcal{H}_t\right) \allowdisplaybreaks\\
        &\qquad = \prod_{i=1}^k \sum_{j=1}^{r_i} \frac{1}{r_i} e^{-r_i} \frac{d^{(i)}_j}{d^{(i)}_1! \, d^{(i)}_2! \ldots d^{(i)}_{r_i}!} \allowdisplaybreaks\\
        &\qquad = \prod_{i=1}^k \left( \frac{d_i}{r_i} \prod_{j=1}^{r_i} \frac{e^{-1}}{d^{(i)}_j!} \right),
    \end{align*}
    where $d_i \coloneqq \sum_{j=1}^{r_i} d^{(i)}_j$. The factor $\tfrac{1}{r_i}$ arises from choosing the $\tilde{0}$-labeled vertex uniformly among the $r_i$ vertices, while the factor $d_j^{(i)}$ accounts for selecting exactly one among its $d_j^{(i)}$ offspring to carry on the special label.  
    
    \end{proof}
    \section{Minimum Spanning Forest and Joint Local Weak Convergence}\label{app:localweakconv_tree}
    A \emph{minimum spanning forest} of an infinite graph is a natural extension of the minimum spanning tree concept to infinite graphs. 
    For a finite graph $(\ell_n, G_n)$ with $n$ vertices, an edge $e = (u,v)$ belongs to the MST precisely when, upon removing from $G_n$ all edges whose lengths are at least $\ell_n(e)$, there is no path connecting $u$ and $v$. In the infinite setting, consider a graph $(\ell_\infty, G_\infty)$ with countably 
    many vertices. We extend the finite definition by introducing a \emph{virtual vertex at infinity}, so that any path to infinity is interpreted as a path to this vertex.
    
    Formally, for an infinite weighted graph $(\ell_\infty, G_\infty)$ in which all edges have distinct weights, its minimum spanning forest $M_\infty$ is defined as the subgraph on the same vertex set that includes any edge $e=(u,v)$ satisfying:
    \begin{enumerate}
        \item $\mathcal{C}(u; G_\infty, e)$ and $\mathcal{C}(v; G_\infty, e)$ are disjoint, and 
        \item At least one of $\mathcal{C}(u; G_\infty, e)$ or $\mathcal{C}(v; G_\infty, e)$ is finite.
    \end{enumerate}
    Here, $\mathcal{C}(u; G_\infty, e)$ denotes the connected component containing $u$ in the subgraph obtained by removing all edges whose weight is at least 
    $\ell_\infty(e)$. In other words, $e \in M_\infty$ if and only if, after removing all edges with weight at least $\ell_\infty(e)$, $u$ and $v$ lie in distinct 
    components and cannot be reconnected even through the virtual vertex.

    It follows directly from the definition that the connected component of any vertex $u$ in $M_\infty$ forms a tree with infinitely many vertices. Thus, through the virtual vertex, all vertices in $M_\infty$ are connected, so $M_\infty$ is indeed a spanning forest. Moreover, under the assumption of distinct edge weights, it is straightforward to show that $M_\infty$ is unique—just as is the case for minimum spanning trees. 

    Suppose that a sequence of random finite weighted graphs $\{(\ell_n, G_n)\}_{n\ge1}$ converges in the local weak sense to a random infinite rooted weighted graph $(\ell_\infty, G_\infty, \root)$. In \cite{Aldous1991Asymptotic}, it is proved that if the limiting graph $(\ell_\infty, G_\infty, \root)$ almost surely has distinct edge weights, then the minimum spanning trees (MSTs) of the finite graphs converge in the local weak sense to the minimum spanning forest (MSF) of $(\ell_\infty, G_\infty, \root)$.

    Given the local weak convergence of a sequence of random finite weighted graphs $\{(\ell_n, G_n)\}_n$ to a random infinite rooted graph $(\ell_\infty, G_\infty, \root)$, \cite{Aldous1991Asymptotic} proved that the minimum spanning trees of the finite graphs converge locally weakly to the minimum spanning forest of $(\ell_\infty, G_\infty, \root)$, provided that all edges in the infinite graph have distinct weights with probability $1$.

    \begin{theorem}[MST Convergence Theorem~\cite{AldousSteel2001Objective}]
        Let $(\ell_\infty, G_\infty, \root)$ be a $\mathcal{G}_\circ$-valued random variable such that, with probability one, $G_\infty$ has infinitely many vertices and no two edges of $G_\infty$ have the same length. Suppose that a sequence of random finite weighted graphs $\{(\ell_n, G_n, \root)\}_n$ converges locally weakly to $(\ell_\infty, G_\infty, \root)$.
        For each $n$, let $M_n$ denote the almost surely unique MST of $G_n$, and let $M_\infty$ denote the almost surely unique MSF of $G_\infty$. Then, the sequence $\{(\ell_n, G_n, M_n, \root)\}_n$ converges jointly locally weakly to $(\ell_\infty, G_\infty, M_\infty, \root)$. 

        Further, if $N_n$ denotes the degree of a uniformly selected vertex in $M_n$ and $N$ denotes the degree of $\root$ in $M_\infty$, then
        \begin{align*}
            N_n \xrightarrow{d} N \quad \text{and} \quad \mathbb{E}[N_n] \to \mathbb{E}[N] = 2,
        \end{align*}
        and, if $S_n$ denotes the sum of the lengths of the edges incident to a uniformly selected vertex in $M_n$ and ${S}_\infty$ denotes the corresponding quantity in $M_\infty$, then 
        \begin{align*}
            S_n\xrightarrow{d} S_\infty.
        \end{align*}
    \end{theorem}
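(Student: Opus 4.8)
The statement is essentially Steele's minimum-spanning-forest convergence theorem~\cite{Steele2002Minimal} (see also~\cite[Sec.~5]{AldousSteel2001Objective}), so the first option is simply to cite it after checking its two hypotheses: $G_\infty$ is almost surely infinite (the limiting trees of Figures~\ref{fig:plantedH}--\ref{fig:plantedS} have infinitely many vertices), and $G_\infty$ almost surely has all edge lengths distinct (planted weights are i.i.d.\ continuous, unplanted weights are Poisson arrival times, so any two coincide with probability zero). For a self-contained proof I would reproduce Steele's argument. Its crux is that ``$e=(u,v)\in M$'' is not a continuous functional of the rooted weighted graph: for finite $G_n$ it holds iff $u,v$ are disconnected after deleting every edge of weight $\ge\ell_n(e)$, and in the limit iff, after the same deletion, the components of $u$ and $v$ are disjoint with at least one finite --- both are tail events. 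The plan is therefore to sandwich $M_n$ and $M_\infty$ between genuinely local surrogates.

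For each $K\in\mathbb{N}$ I would define a truncated test $M^{(K)}$: explore the subgraph of edges of weight strictly below $\ell(e)$ outward in weighted distance from $u$ and from $v$, stopping each exploration after $K$ vertices or when its component is exhausted; put $e\in M^{(K)}$ iff the two explorations are vertex-disjoint and at least one exhausts its component. One checks directly that $M^{(K)}\subseteq M$ for finite graphs and $M_\infty^{(K)}\subseteq M_\infty$ in the limit, the only omitted edges being those whose smaller incident component at level $\ell(e)$ has more than $K$ vertices. Crucially, for any radius $R$ the restriction $M^{(K)}\cap B_R(\root)$ is determined by the rooted weighted graph inside a bounded neighborhood of $\root$ (edges of $B_R(\root)$ have weight at most $2R$, so the auxiliary explorations never leave distance $O(RK)$ of $\root$) and depends continuously there, since with probability one all weights are distinct, making the test stable under small perturbations. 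Hence the given convergence $(\ell_n,G_n,\root_n)\Rightarrow(\ell_\infty,G_\infty,\root)$ upgrades, for each fixed $K$, to $(\ell_n,G_n,M_n^{(K)},\root_n)\Rightarrow(\ell_\infty,G_\infty,M_\infty^{(K)},\root)$ jointly in the local weak sense.

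Next I would let $K\to\infty$. On the limit side this is painless: for a fixed realization and fixed $R$, only finitely many $M_\infty$-edges meet $B_R(\root)$ and each has a finite incident component on one side, so $M_\infty^{(K)}\cap B_R(\root)=M_\infty\cap B_R(\root)$ once $K$ is large, giving $M_\infty^{(K)}\to M_\infty$ almost surely. On the finite side I need uniformity in $n$. The edges of $M_n\setminus M_n^{(K)}$ are exactly those that in Kruskal's order merge two current components each of size at least $K$, and a charging argument on the binary merge tree (in a binary tree with $n$ leaves, at most $n/K$ internal nodes have both subtrees of size $\ge K$) yields $|M_n\setminus M_n^{(K)}|\le n/K$. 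The hard part will be converting this \emph{global} $O(n/K)$ bound into the statement that a uniformly random root is --- with probability $\to 1$ as $K\to\infty$, uniformly in $n$ --- not incident within radius $R$ to such an edge; this requires controlling how many vertices can contain a fixed ``bad'' edge in their $R$-ball, and is exactly the estimate in~\cite{AldousSteel2001Objective,Steele2002Minimal} that uses the concrete growth of balls and the one-endedness of the MSF components. Granting it, a diagonal argument ($\mathbb{E}[\Phi(G_n,M_n,\root_n)]$ close to $\mathbb{E}[\Phi(G_n,M_n^{(K)},\root_n)]\to\mathbb{E}[\Phi(G_\infty,M_\infty^{(K)},\root)]\to\mathbb{E}[\Phi(G_\infty,M_\infty,\root)]$ for every bounded local functional $\Phi$) gives the joint local weak convergence.

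Finally I would deduce the three remaining claims from the joint convergence. The degree of the root in $M_\infty$ and the sum $S_\infty$ of its incident $M_\infty$-edge weights are almost-surely-continuous functionals of the marked rooted graph (the root has finitely many incident $M_\infty$-edges, almost surely none at a weight discontinuity), so $N_n\xrightarrow{d}N$ and $S_n\xrightarrow{d}S_\infty$. For the means, $\mathbb{E}[N_n]=2(n-1)/n\to2$ since $M_n$ is a spanning tree on $n$ vertices; and $\mathbb{E}[N]=2$ follows from $N_n\xrightarrow{d}N$ together with uniform integrability of $\{N_n\}$, which holds because $N_n$ is at most the number of edges at the root of weight $\le\tau$ (a quantity with uniformly bounded moments in the complete-graph model under our scaling) plus $S_n/\tau$ (uniformly integrable by the argument already used for Theorem~\ref{thm:asymcost}); alternatively $\mathbb{E}[N]=2$ follows from the mass-transport principle applied to the one-ended forest $M_\infty$.
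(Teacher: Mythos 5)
Your first option---cite the minimum-spanning-forest convergence theorem from the objective-method literature after verifying the two hypotheses (a.s.\ infinite limit, a.s.\ distinct edge weights)---is exactly what the paper does: it presents this theorem as a known result (attributed there to \cite{Aldous1991Asymptotic}; the precise statements are in \cite{AldousSteel2001Objective,Steele2002Minimal}) and gives no proof of its own. The truncation-and-sandwich sketch you add in the remaining paragraphs is a reasonable outline of Steele's actual argument, but it has no counterpart in the paper to compare against.
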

\end{document}